\PassOptionsToPackage{table,xcdraw}{xcolor}
\documentclass[sigplan,10pt]{acmart}
\renewcommand\footnotetextcopyrightpermission[1]{}
\AtBeginDocument{%
  }

\usepackage{booktabs}
\usepackage{tabularx}
\usepackage{array}

\usepackage{tcolorbox}
\usepackage{soul}
\setul{0.5ex}{0.1ex} 
\usepackage{graphicx}    
\usepackage{subcaption}  
\usepackage[ruled,vlined]{algorithm2e}
\usepackage{array}
\usepackage{xspace}
\usepackage{pifont}
\usepackage{multirow}
\usepackage{tikz}  
\usepackage{amsmath}
\ifdefined\symbf
  \let\bm\symbf
\else
  \usepackage{bm}
\fi
\newcommand{\squishlist}{
	\begin{list}{$\bullet$}
		{ \setlength{\itemsep}{2pt}      \setlength{\parsep}{3pt}
			\setlength{\topsep}{3pt}       \setlength{\partopsep}{0pt}
			\setlength{\leftmargin}{5.5mm} \setlength{\labelwidth}{1em}
			\setlength{\labelsep}{0.5em} } }
	\newcommand{\squishend}{
\end{list}  }

\usepackage[table]{xcolor}
\usepackage{multirow}
\usepackage{makecell}
\usepackage{booktabs}
\usepackage{threeparttable}

\definecolor{rowA}{RGB}{245,248,255}
\definecolor{rowB}{RGB}{250,250,250}
\definecolor{rowC}{RGB}{248,250,245}  
\definecolor{rowD}{RGB}{255,248,245}  
\definecolor{rowE}{RGB}{248,245,255}  

\newcommand{\cellA}{\cellcolor{rowA}}
\newcommand{\cellB}{\cellcolor{rowB}}

\usepackage{tikz}  
\usepackage{xcolor}

\SetKwComment{Comment}{$\triangleright$ }{}
\usepackage[normalem]{ulem}
\usepackage{graphicx}
\usepackage{minted}
\usepackage{listings}
\usepackage{subcaption}
\usepackage{caption}
\usepackage{enumerate}
\usepackage{enumitem}
\usepackage{multirow}
\newtheorem{remark}{Remark}

\newtheorem{theorem}{Theorem}
\newcommand{\para}[1]{\noindent {\bf #1} \hspace{2pt}}

\usepackage{enumerate}
\usepackage{enumitem}

\newcommand{\eg}{\emph{e.g.,}\xspace}

\usepackage{makecell}
\DeclareMathOperator*{\argmin}{arg\,min}
\DeclareMathOperator*{\argmax}{arg\,max}
\newcommand{\boxedref}[2]{%
  \hyperref[#1]{%
    \begingroup
      \setlength{\fboxsep}{0.5pt}% 内边距，越小越“贴字”，可改成 0pt 试试
      \setlength{\fboxrule}{0.5pt}% 线条粗细
      \color{green}% 只给框上色
      \fbox{\color{black}#2}% 框里文字重新设回黑色
    \endgroup
  }%
}

\newcommand{\codehl}[1]{\textcolor{green!60!black}{#1}}
\newcommand{\codehll}[1]{\textcolor{blue!70!black}{#1}}

\newcommand{\ByteX}{\textsf{ByteX}\xspace}

\newcommand{\AuthorSep}{,\enspace}

\usepackage{scalefnt}

\usepackage{caption}
\newcommand{\codecomment}[1]{\textcolor{gray!60}{#1}}
\setminted{
  fontsize=\footnotesize,
  breaklines,
  frame=single,
  framesep=2mm,
  linenos=false,
  escapeinside=||
}

\lstdefinestyle{hybridquery}{
  basicstyle=\ttfamily\small,
  columns=fullflexible,
  keepspaces=true,
  showstringspaces=false,
  breaklines=true,
  escapeinside={|}{|},
  captionpos=t
}
\newif\iftechreport
\newif\ifrelease

\releasetrue      

\techreportfalse   

\begin{document}
\pagestyle{plain}
\title{\ByteX: A Unified AI Search Engine at ByteDance}


\author{%
\begin{tabular}{c}
Yao Tian\textsuperscript{\textdagger}\AuthorSep
Yuncheng Lu\textsuperscript{\textdagger}\AuthorSep
Liyao Xiong\AuthorSep
Yuming Xu\AuthorSep
Hao Zhang\textsuperscript{*}\AuthorSep
Weichen Zhao
\tabularnewline
Xi Zhao\AuthorSep
Bo Kuang\AuthorSep
Dongyu Wang\AuthorSep
Jiehui Li\AuthorSep
Yakun Li\textsuperscript{*}\AuthorSep
Lei Zhang
\tabularnewline[0.25em]
ByteDance
\end{tabular}%
} 
\renewcommand{\shortauthors}{Tian et al.}

\begin{abstract}
Since 2016, \ByteX has been the foundation of ByteDance's search infrastructure, scaling to more than 7,000 clusters and 300 PB of indexed data. Driven by the demands of AI workloads, \ByteX has evolved from a text search engine into a unified AI search system supporting vector retrieval, lexical matching, and predicate filtering. Its largest deployment indexes nearly one trillion high-dimensional vectors. This scale exposes two central bottlenecks in AI-era retrieval: memory-intensive graph-index construction under sustained ingestion, and the prohibitive cost of keeping vector indexes entirely in memory.

\ByteX addresses these bottlenecks with two techniques. First, it introduces a quantization-aware vector kernel based on \textsf{SymRaBitQ}, a new symmetric quantization scheme with tight theoretical guarantees that allows index construction to run directly in the quantized space accurately and efficiently without retaining a copy of full-precision vectors. Second, it provides a hybrid storage engine that supports memory-resident, hybrid, and SSD-resident deployments, with fine-grained record-level caching to trade memory for latency under operational control. On large-scale benchmarks, \ByteX improves throughput by up to $3\times$, reduces indexing memory by $80\%$, and lowers operating cost by $86\%$ compared with prior systems, while supporting trillion-vector scale, write-heavy or latency-sensitive workloads in production.
\end{abstract}





\maketitle
\begingroup
\renewcommand{\thefootnote}{\fnsymbol{footnote}}

\footnotetext[2]{Equal contribution:
\href{mailto:yao.tian@bytedance.com, luyuncheng@bytedance.com}
{\{yao.tian, luyuncheng\}@bytedance.com}.}

\footnotetext[1]{Corresponding authors:
\href{mailto:zhanghao.ai@bytedance.com, liyakun.hit@bytedance.com}
{\{zhanghao.ai, liyakun.hit\}@bytedance.com}.}

\endgroup

\section{Introduction}
\label{sec:intro}

\ByteX{}\footnote{\ByteX{} originated from an early OpenSearch branch; we thank the community for its foundational contributions.} has served since 2016 as ByteDance's core search infrastructure. Today, \ByteX{} manages more than 7,000 clusters and over 300 PB of storage. AI applications, such as retrieval-augmented generation~\cite{lewis2020retrieval}, recommendation~\cite{li2018design, peng2024large}, semantic analytics~\cite{patel2025semantic, liu2024suql}, and conversational assistants~\cite{githubcopilot2023, cursor2024}, have driven \ByteX to evolve from text-based search engine to \emph{AI search} engine, where vectors are first-class indexed data and queries combine vector similarity with lexical matching, structured predicates, and reranking.

Vector data growth is multiplicative rather than incremental. As shown in Fig.~\ref{fig:overalltrend}, \ByteX is storing vectors at an exponential rate, while Fig.~\ref{fig:dailygrowth} shows two production workloads each ingesting several billion new vectors per day. Our largest clusters are already nearing \emph{trillion-vector} scale with PB-scale storage footprints. The demand is rising along three coupled axes: granularity, dimensionality, and retention. Retrieval units are becoming finer-grained, \eg video search moving from scene-level to frame-level embeddings; newer encoders are increasing embedding dimensionality from 128 to 2048; and vectors are being stored for longer periods, e.g., AI-memory workloads extending retention from 45 to 365~days. These trends compound: each application may generate more vectors per user, store wider vectors, and retain them for longer. Consequently, vector data management has shifted from a component-level optimization to a PB-scale system-level optimization. 

\begin{figure*}[t]
    \centering
    \begin{minipage}[t]{1.72in}
        \centering
        \includegraphics[width=\linewidth]{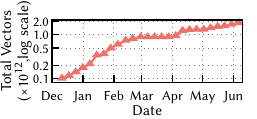}
        \caption{Overall Growth}
        \label{fig:overalltrend}
    \end{minipage}%
    \hfill
    \begin{minipage}[t]{1.72in}
        \centering
        \includegraphics[width=\linewidth]{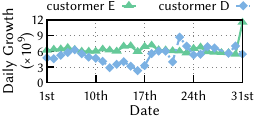}
        \caption{Daily Growth}
        \label{fig:dailygrowth}
    \end{minipage}%
    \hfill
    \begin{minipage}[t]{1.72in}
        \centering
        \includegraphics[width=\linewidth]{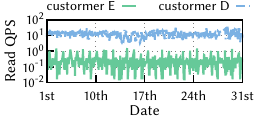}
        \caption{Daily QPS}
        \label{fig:dailyQPS}
    \end{minipage}
    \hfill
    \begin{minipage}[t]{1.72in}
        \centering
        \includegraphics[width=\linewidth]{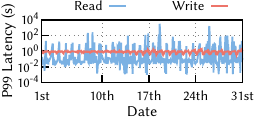}
        \caption{P99 of Customer E}
        \label{fig:dailyLatency}
    \end{minipage}
    \vspace{-1em}
\end{figure*}

Vector data is resource-intensive and challenging to manage. Unlike scalar fields, each vector is a dense object that is expensive to store, index, and query. A 2048-dimensional FP32 vector occupies 8~KB, so 100 million vectors require 800~GB of raw storage before indexing. Vectors are usually partitioned into random segments during ingestion\footnote{Production vector engines such as OpenSearch/ElasticSearch~\cite{opensearch2024, elasticsearch2024} and Milvus~\cite{DBLP:conf/sigmod/WangYGJXLWGLXYY21} commonly adopt segment-based index management, which supports high-throughput vector ingestion while simplifying incremental addition and deletion in deployed systems.}, and indexes are built over them. Index construction for these vectors is memory- and compute-intensive. For example, building a graph-based vector index \cite{weaviate-hnsw-vamana, jayaram2019diskann}, typically the fastest option for vector search, for a 1-million-vector segment requires at least 8~GB of memory and takes over one hour. Querying, in other words, finding the approximate nearest neighbor (ANN) vectors to a query vector, is similarly expensive: graph-based methods employ a greedy traversal to progressively move toward vertices closer to the query. Each hop takes at least 1 random I/O, and there could be 100+ hops to reach 95\% recall in a single segment, and it must search all segments.

At trillion-vector scale, with ingestion and serving running concurrently, memory becomes the central constraint: \textit{every byte spent on index construction is a byte unavailable for serving}. Our production workloads in \S\ref{sec:preliminary} expose two bottlenecks. First, ingestion amplifies memory usage: index builders must retain full-precision vectors until segment finalization, so at 10+ billion vectors ingested per day and peaks above 700K vectors/s, naive construction can consume tens of terabytes and exhaust cluster memory. Second, serving becomes I/O-bound because memory-only serving is economically infeasible, making SSD-backed serving necessary at a trillion-vector scale\footnote{VM memory costs roughly 61$\times$ more per byte than SSD storage. A typical VM instance with 8\,GB memory is priced at $\approx$ 6.7\$/GiB-month, versus an SSD-backed block device is priced at 0.10\$/GiB-month.}. Cache efficiency is therefore critical: misses on graph neighbors or vectors trigger SSD reads, increasing tail latency and reducing throughput.

These problems cannot be avoided by moving vector search into an isolated service. Customers require OpenSearch compatibility to preserve existing APIs, dashboards, access control, ranking pipelines, and operational playbooks,\footnote{Many customers are migrating from OpenSearch-based lexical search to hybrid search that combines vector, lexical, and structured retrieval; preserving OpenSearch compatibility is therefore essential.} while also demanding hybrid queries that combine vector similarity, lexical matching, structured predicates, and reranking. Existing systems miss one side of this requirement. Faiss~\cite{faiss} provides efficient ANN kernels but lacks distributed ingestion, durability, and operational control. Milvus~\cite{DBLP:conf/sigmod/WangYGJXLWGLXYY21} provides an integrated vector service but does not match OpenSearch's mature lexical and ranking ecosystem. OpenSearch and Elasticsearch~\cite{opensearch2024, elasticsearch2024} provide the right ecosystem, but their vector paths hit memory and I/O bottlenecks at trillion-vector scale. This leaves a practical gap: production hybrid search needs OpenSearch compatibility and extreme-scale vector performance in the same system.

We present \ByteX, a production-grade search engine for AI workloads built inside the OpenSearch/Elasticsearch ecosystem. Rather than proposing a standalone vector service, \ByteX replaces the quantization, storage, and query-execution paths that become bottlenecks at extreme scale, enhances hybrid search ability, enables unified storage between memory-only, hybrid, and disk-only deployed mode, while preserving existing APIs, DSLs, dashboards, and operational playbooks.  Its design follows from a deployment constraint: at trillion-vector scale, the hard problem is not only low-latency ANN search, but sustaining ingestion, index, storage, and hybrid query execution within an operational search platform.

\ByteX addresses these challenges through two design choices. First, compression is integrated into the indexing path. Full-precision vectors otherwise dominate memory consumption during index construction, so \ByteX introduces \textsf{SymRaBitQ} (\S\ref{sec:vector_kernel}), a symmetric quantization scheme with solid theoretical guarantees. Its unbiased, error-bounded distance estimator operates directly on quantized codes, eliminating full-precision vector copies and substantially reducing indexing cost.
Second, storage placement is adaptive and optimized for cache efficiency. Production clusters span memory-rich, memory-constrained, and SSD-heavy deployments, and their cost--latency trade-offs change with workload. \ByteX supports memory-resident, hybrid memory-disk, and SSD-resident execution within the same system, using fine-grained record-level caching and asynchronous graph traversal optimized for SSD I/O patterns (\S\ref{sec:storage_engine}). Both design choices are integrated into an OpenSearch-compatible architecture that preserves support for rich predicates, ranking functions, and hybrid retrieval workloads (\S\ref{sec:query_engine}).
Together, these choices make \ByteX a drop-in evolution of existing search infrastructure rather than a parallel vector stack, closing the gap between ecosystem maturity and trillion-scale AI retrieval without requiring migration or abandoning operational tooling.
On large-scale benchmarks, \ByteX delivers up to $3\times$ higher throughput, $80\%$ lower indexing memory, and $86\%$ lower operating cost compared to state-of-the-art systems while preserving recall and latency.  In production, \ByteX supports both PB-scale write-heavy clusters and latency-sensitive online serving clusters within the same operational ecosystem. 

In summary, this paper makes the following contributions:
\begin{itemize}[leftmargin=*, nosep]
  \item We distill the production insights from ByteDance's AI search platform--7\,000+ clusters, 300+\,PB storage, trillion-scale vectors--identifying memory-efficient graph construction during ingestion and vector storage during serving as the two dominant physical bottlenecks at extreme scale.
  \item We introduce \textsf{SymRaBitQ}, a quantization scheme with strong theoretical accuracy guarantees, and a quantization-aware vector kernel that enables search, graph construction, and segment merge entirely in the quantized space.
  \item We design an adaptive storage engine supporting memory-resident, hybrid, and SSD-resident vector indexes with fine-grained cache management for efficiently utilizing cache in trillion-scale vector workloads, and a hybrid query engine that integrates vector, lexical, predicate, and reranking operators within a single distributed query plan.
  \item We evaluate \ByteX on public benchmarks and production deployments up to 838\,B vectors and 3\,PB of storage, achieving up to $80\%$ lower graph-build peak memory, $60\%$ faster build time, and $7.3\times$ lower storage cost while preserving recall and latency.
\end{itemize}

\begin{table*}[t]
\centering
\scriptsize
\setlength{\tabcolsep}{2.6pt}
\renewcommand{\arraystretch}{1.12}
\caption{Representative \ByteX{} production workloads across query patterns, ingestion rates, storage tiers, and scale.}
\label{tab:prod-workloads}
\begin{threeparttable}
\begin{tabularx}{\textwidth}{@{}
>{\raggedright\arraybackslash}p{0.28cm}
>{\raggedright\arraybackslash}p{3.00cm}
>{\raggedright\arraybackslash}X
>{\raggedleft\arraybackslash}p{1.5cm}
>{\raggedleft\arraybackslash}p{0.48cm}
>{\raggedleft\arraybackslash}p{0.76cm}
>{\raggedleft\arraybackslash}p{0.70cm}
>{\raggedright\arraybackslash}p{0.82cm}
>{\raggedright\arraybackslash}p{0.70cm}
>{\raggedleft\arraybackslash}p{0.82cm}
@{}}
\toprule
\textbf{ID} &
\textbf{Workload} &
\textbf{Access pattern} &
\textbf{Vectors} &
\textbf{Dim} &
\textbf{Write/s} &
\textbf{QPS} &
\textbf{SLA} &
\textbf{Tier} &
\textbf{Size} \\
\midrule
A &
Image Assets Search &
Text-to-Image retrieval with tenant ID, tags, and time ordering &
125\,M & 768 & 100 & 10+ & <100\,ms & Memory & 2\,TB \\

B &
Multimodal Assets Search &
Multimodal-to-Multimodal\tnote{2} retrieval with tenant ID and keywords&
52\,B & 1024 & 65 & 1000+ & <100\,ms & Memory & 120\,TB \\

C &
AD Recommendation &
Video-to-Video retrieval with range predicates &
28.4\,B & 512 & 81\,K & 10+ & <1\,s & Memory & 212\,TB \\

D &
Compliance Review Search &
(Text, Image)-to-Image retrieval with tags &
1 T (1,000 B) & 2048 & 100+\,K\tnote{3} & 1--10 & <100\,s & SSD & 2\,PB \\

E &
User Identity Search &
(Text, Image)-to-(Text, Image) retrieval with time range and tags &
1 T (1,000 B) & 512 & 100+\,K & 1--10 & <100\,s & SSD & 3\,PB \\

F &
Pretraining Data Search &
Text-to-Text retrieval with content filters &
52\,B & 1024 & offline\tnote{1} & 100\,K+ & <100\,ms & Memory & 100\,TB \\
\bottomrule
\end{tabularx}
\begin{tablenotes}[flushleft]
\scriptsize
\item[1] Index is constructed offline in batch, no online writes.
\item[2] Multimodal: Text, Image, Video, Audio.
\item[3] Peak at 700+\,K
\end{tablenotes}
\end{threeparttable}
\vspace{-1.5em}
\end{table*}

\section{Production Workload \& Challenges}
\label{sec:preliminary}

\subsection{Production Workloads}
\label{sec:workloads}
Table~\ref{tab:prod-workloads} summarizes representative \ByteX{} workloads across 7000+ clusters/services with over 300\,PB storage. Customers first adopted \ByteX{} as an OpenSearch-compatible platform, then incrementally enabled vector search over existing queries and workflows. Consequently, \ByteX{} cannot expose vector search as a standalone ANN service: it must preserve query semantics, shard/segment management, filtering, ranking, and hybrid retrieval across memory- and SSD-resident indices. Workloads expose overlapping features:
\vspace{0.4em}
\begin{itemize}[leftmargin=*, nosep]
\item \textbf{Hybrid queries.}
Customers~A--F combine vector similarity with lexical, scalar, and domain-specific predicates, then apply ranking and LLM-based recall. Their cost is therefore dominated not by ANN traversal alone, but by materialization, filtering, merging, and downstream reranking. These workloads require end-to-end hybrid query optimization.

\item \textbf{Low-latency online serving.}
Memory-resident workloads, such as A, B, C, and F, target p99 latencies in the tens to hundreds of milliseconds for high-dimensional vector queries. 

\item \textbf{PB-scale ingestion \& serving.}
Disk-resident workloads, such as D and E, operate at PB scale, where the memory-only option is economically infeasible. Their query requirements are comparatively relaxed: both QPS and tail latency remain modest, as shown in Fig.~\ref{fig:dailyQPS} and Fig.~\ref{fig:dailyLatency}. The primary challenge is sustaining ingestion and serving over massive datasets.
\end{itemize}





\para{System requirements.}
These workloads impose three requirements on \ByteX{}.

\begin{itemize}[leftmargin=1.2em,itemsep=0.25em,topsep=0.25em,parsep=0pt,partopsep=0pt]
\item \textbf{OpenSearch-native hybrid execution.}
\ByteX{} must preserve the OpenSearch compatibility---including ingestion, segment merge, replication, recovery, monitoring, query DSL, filtering, ranking, and shard-level execution---while making vector search a first-class operator inside the same distributed query plan. 

\item \textbf{Unified memory--SSD backend.}
\ByteX{} must support memory-resident, SSD-resident, and mixed deployments with one architecture. As Table~\ref{tab:prod-workloads} shows, production deployments span 105--212\,TiB memory-backed workloads and 2--3\,PB SSD-backed workloads, so the system cannot rely on a DRAM-only ANN design or a separate disk-specialized serving path.

\item \textbf{Large-scale continuous ingestion and serving.}
\ByteX{} must sustain hundreds of billions to trillions of vectors under continuous writes while serving queries in reasonable times. At this scale, ingestion-side index construction and serving-side cache management contend for the same memory, making ingestion and serving a coupled system problem rather than two independent pipelines.

\end{itemize}

\subsection{Challenges}
\label{sec:challenges}

At the trillion-vector scale, DRAM is the shared bottleneck between ingestion and serving. In practice, naïve graph construction can transiently consume more memory than the final index itself. For Customer~E, construction peaked at 27\,TB against a 13.8\,TB DRAM budget, making the system infeasible even before reserving memory for the serving-side caches needed to hide SSD latency. This exposes a fundamental trade-off: allocating DRAM for fast ingestion inflates peak memory, while leaving too little for caching slows query throughput. Two interdependent challenges emerge:
\begin{itemize}[leftmargin=*, nosep]
\item \textbf{Excessive peak memory during graph construction.} Conventional builders retain full-precision vectors, candidate neighborhoods, pruning state, partial edges, and metadata until segments are finalized, inflating peak memory. The challenge is reducing this peak without fragmenting segments or degrading index quality, ensuring that SSD traversal remains efficient downstream.

\item \textbf{Ineffective DRAM use for SSD-resident graph search.}
Trillion-scale indices cannot fit in DRAM, so queries traverse SSD-resident graphs, vectors, and metadata. The challenge is that DRAM capacity alone does not determine performance: cached data only helps when it avoids SSD accesses on the query-critical path. Without traversal-aware cache allocation, scarce DRAM can be spent on bytes with low latency impact, while hot graph regions still trigger SSD reads.

\end{itemize}
\vspace{0.4em}
\para{Limitations of existing systems.}
Disk-based ANN in OpenSearch and Elasticsearch~\cite{opensearch2024,elasticsearch2024} reduces serving-time DRAM with better binary quantization (BBQ) \cite{bbq}, but leaves the ingestion--serving memory conflict unresolved. On the ingestion side, BBQ is asymmetric: its compressed representation is used for search, but graph construction still materializes full-precision vectors and construction metadata, directly competing with serving caches for DRAM. On the serving side, generic \texttt{mmap}/page caches operate at page granularity and cannot prioritize ANN records on query-critical traversal paths. \ByteX{} addresses this shared bottleneck by quantized construction (\S\ref{sec:vector_kernel}), and a fine-grained DRAM--SSD cache (\S\ref{sec:storage_engine}), reducing ingestion peak memory and converting the freed DRAM into higher serving-cache efficiency, while preserving full opensearch compatibility and support rich predicates, filtering, and ranking options (\S\ref{sec:query_engine}).

\section{System Overview}
\label{sec:overview}
\begin{figure}[t]
  \centering
  \includegraphics[width=0.49\textwidth]{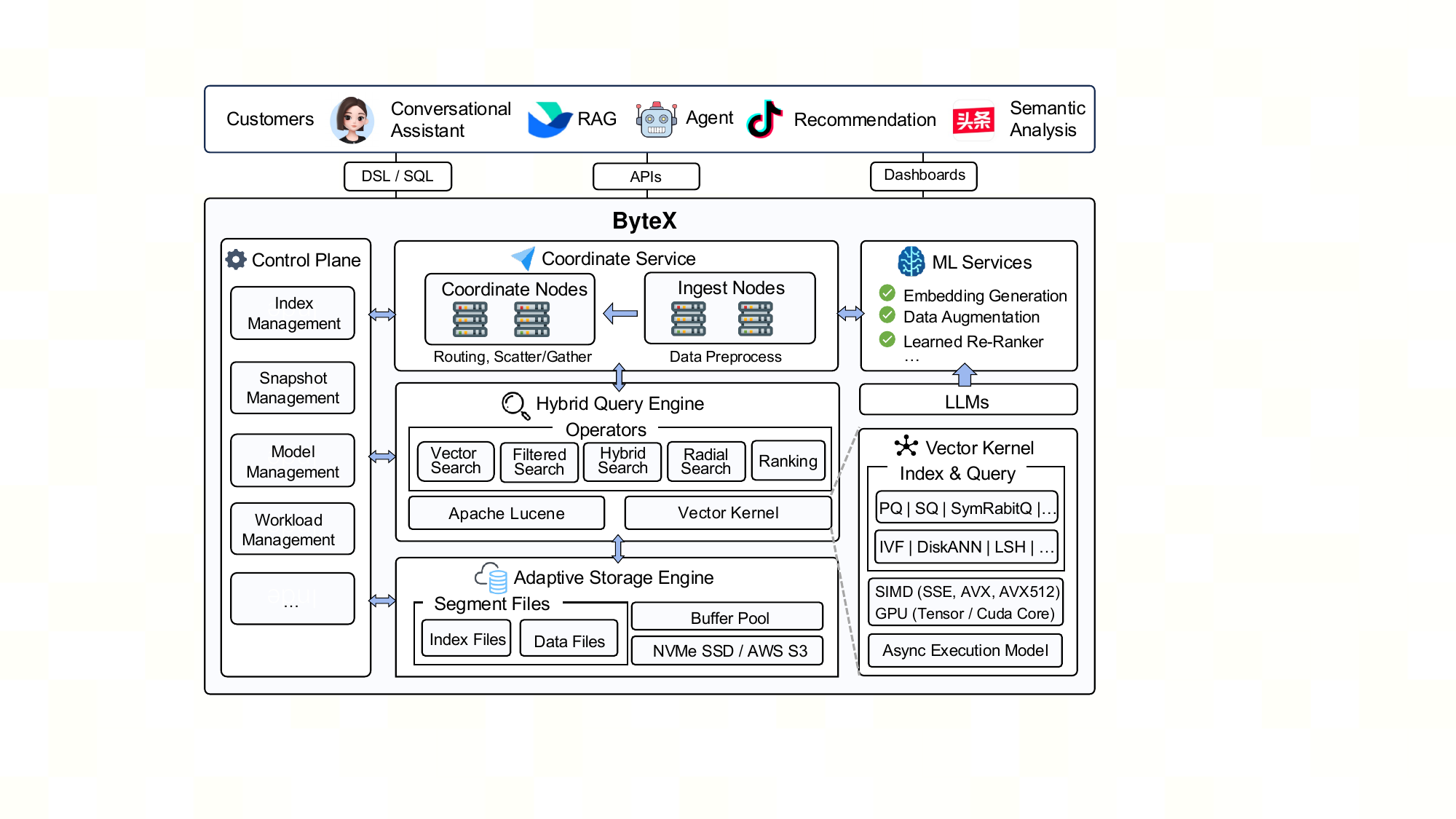}
  \caption{System architecture of \ByteX{}.}
  \label{fig:overview}
\end{figure}
Fig.~\ref{fig:overview} shows the \ByteX{} architecture. Externally, it preserves full OpenSearch compatibility: clients, REST APIs, SDKs, ingestion pipelines, dashboards, and index lifecycle tools work unchanged. This lets existing deployments adopt \ByteX{} without rewriting workflows.

Internally, \ByteX{} treats vectors as first-class citizens. Instead of treating vectors as add-ons, it co-designs vector representation, storage layout, and hybrid query execution to boost search efficiency while keeping OpenSearch’s robustness for non-vector workloads.

The engine redesign targets three core components for production-scale vectors:
\vspace{0.2em}
\begin{itemize}[leftmargin=*, nosep]
\item \textbf{Quantization-aware vector kernel} (\S\ref{sec:vector_kernel}): \textsf{SymRaBitQ} enables distance estimation, index construction, and segment merge without repeated full-precision materialization.
\item \textbf{Adaptive storage engine} (\S\ref{sec:storage_engine}): controls segment lifecycle, memory/SSD placement, caching, and async I/O, addressing the large memory-to-SSD cost gap in production.
\item \textbf{Hybrid query engine} (\S\ref{sec:query_engine}): unifies vector similarity, lexical relevance, structured predicates, score fusion, and reranking within an OpenSearch-compatible distributed plan.
\end{itemize}

\para{Ingestion \& Query Workflow.}
\ByteX{} separates write freshness from serving efficiency. Newly refreshed data first becomes searchable as lightweight fresh segments, while background merge bounds query fanout and progressively upgrades segments from raw vectors, to IVF indexes, to graph indexes. For graph-indexed segments, \ByteX{} builds and merges directly in the \textsf{SymRaBitQ} quantized space, avoiding repeated full-precision vector materialization. The segment abstraction also decouples indexing from placement: the same index can serve from DRAM, hybrid memory--SSD, or SSD-resident storage without reindexing. Appendix~\ref{sec:appendix_write} details the ingestion path.

On the read path, \ByteX{} preserves native DSL semantics but executes hybrid retrieval as one distributed plan. BM25, vector search, predicates, range filters, score fusion, and reranking are optimized together rather than stitched across separate services. The query engine is placement-oblivious: DRAM segments use in-memory traversal, while SSD-resident graph records use a record-level buffer pool and asynchronous traversal to overlap I/O with distance computation. Appendix~\ref{sec:appendix_read} details the serving paths, and Appendix~\ref{sec:appendix_dsl} shows a representative native DSL query.

\section{Quantization-Aware Vector Kernel}
\label{sec:vector_kernel}
Existing SOTA quantization schemes, such as RabitQ/BBQ \cite{DBLP:journals/pacmmod/GaoL24, bbq}, are asymmetric: they quantize data vectors but evaluate them against full-precision query vectors. Thus, they do not directly support quantized-domain data-to-data distance evaluation, which \ByteX needs during index construction. Falling back to full precision would require an extra uncompressed data copy, which amplifies memory usage. \ByteX therefore introduces \textsf{SymRaBitQ}, a symmetric scheme that unifies data-to-query and data-to-data evaluation in the same quantized-space primitive.



\subsection{SymRaBitQ Quantization}
\label{sec:quant}

\textsf{SymRaBitQ} is the distance primitive used by the \ByteX vector kernel.  It keeps the high-accuracy RaBitQ/ExtRaBitQ code construction (reviewed in Appendix~\ref{sec:appendix_rabitq}) and changes the estimator so that both operands may be represented by compact codes.  Throughout this section, raw vectors are denoted by $\bm{x}_r,\bm{q}_r \in \mathbb{R}^D$.  When a vector is encoded relative to a centroid $\bm{c}$, \eg in IVF or partitioned graph indexes, \ByteX uses normalized residuals
\begin{equation}\small
\bm{o}:=\frac{\bm{x}_r-\bm{c}}{\|\bm{x}_r-\bm{c}\|},
\qquad
\bm{q}:=\frac{\bm{q}_r-\bm{c}}{\|\bm{q}_r-\bm{c}\|}.
\end{equation}
We write $\bar{\bm{o}}$ and $\bar{\bm{q}}$ for their quantized representations, and store for every encoded data vector the residual norm $\rho_{\bm{o}}=\|\bm{x}_r-\bm{c}\|$ and the self-correlation $\gamma_{\bm{o}}=\langle \bar{\bm{o}},\bm{o}\rangle$.
Algorithm~\ref{alg:symrabitq-quant} summarizes the encoding path used for both data and query residuals.  The variable $\bm{x}$ is instantiated as $\bm{o}$ for stored vectors and $\bm{q}$ for queries.

\para{Symmetric estimator.}
Given two encoded residual unit vectors, \textsf{SymRaBitQ} estimates their inner product as
\begin{equation}\small
\label{eq:symrabitq_est}
\widehat{\langle \bm{o}, \bm{q} \rangle}
\;=\;
\frac{\langle \bar{\bm{o}}, \bar{\bm{q}} \rangle}
{\langle \bar{\bm{o}}, \bm{o} \rangle\,\langle \bar{\bm{q}}, \bm{q} \rangle}.
\end{equation}
For data-to-data comparisons, both denominator terms are precomputed; for query-to-data comparisons, only $\langle \bar{\bm{q}},\bm{q}\rangle$ is computed online once. The corresponding squared $L_2$ distance used by the graph and IVF kernels is
\begin{equation}\small
\label{eq:symdist}
\operatorname{SymDist}_{B}(\bar{\bm{o}},\bar{\bm{q}};\bm{c})
= \rho_{\bm{o}}^2 + \rho_{\bm{q}}^2
-2\rho_{\bm{o}}\rho_{\bm{q}}\cdot \widehat{\langle \bm{o},\bm{q}\rangle},
\end{equation}
where $B$ denotes the bit width of the code.  Inner-product search is handled by the same estimator after applying the standard residual reconstruction terms.

\para{Code-only computation.}
The numerator $\langle\bar{\bm{o}},\bar{\bm{q}}\rangle$ in Eq.~(\ref{eq:symrabitq_est}) is computed directly from the integer codes, not by materializing rotated floating-point vectors.  
For $B=1$, the computation reduces to sign agreement and can be evaluated with bit operations and popcount.  For the $B=5$ construction code used in \ByteX, it becomes a compact integer-code dot product; the graph and IVF kernels below call Eq.~(\ref{eq:symdist}) directly rather than a generic full-precision distance.

\vspace{0.2em}
\para{Accuracy guarantee.}
\textsf{SymRaBitQ} preserves the accuracy properties needed by an ANN kernel: the estimator is unbiased and has an $O(1/\sqrt{D})$ error bound with high probability.

\begin{algorithm}[t]
\small
\caption{\textsf{SymRaBitQ} Quantization}
\label{alg:symrabitq-quant}
\LinesNumbered
\KwIn{Raw vector $\bm{x}_r$, centroid $\bm{c}$, bit width $B$, rotation matrix $P$}
\KwOut{Encoded record $z_{\bm{x}}=(\bar{\bm{x}},\bar{\bm{x}}_0,\gamma_{\bm{x}},\rho_{\bm{x}})$}

$\rho_{\bm{x}}\leftarrow\|\bm{x}_r-\bm{c}\|$, $\bm{x}\leftarrow(\bm{x}_r-\bm{c})/\rho_{\bm{x}}$\;
$\bm{y}\leftarrow P^{-1}\bm{x}$\;
$\bm{u}\leftarrow$ nearest $B$-bit integer-lattice code to $\bm{y}$\;
$\tilde{\bm{u}}\leftarrow\bm{u}-(2^B-1)\bm{1}_D/2$, $\bar{\bm{x}}\leftarrow P\tilde{\bm{u}}/\|\tilde{\bm{u}}\|$\;
$\bar{\bm{x}}_0\leftarrow$ first-bit/sign projection of $\bm{u}$\;
$\gamma_{\bm{x}}\leftarrow\langle\bar{\bm{x}},\bm{x}\rangle$\;
\Return{$z_{\bm{x}}=(\bar{\bm{x}},\bar{\bm{x}}_0,\gamma_{\bm{x}},\rho_{\bm{x}})$}
\end{algorithm}

\begin{theorem}[\textsf{SymRaBitQ} Estimator]
\label{th:estimator}
For normalized residual vectors $\bm{o}$ and $\bm{q}$,
\begin{equation}\small
\mathbb{E} \left[
\frac{\langle \bar{\bm{o}}, \bar{\bm{q}} \rangle}
{\langle \bar{\bm{o}}, \bm{o}\rangle
\langle \bar{\bm{q}}, \bm{q} \rangle}
\right] = \langle \bm{o}, \bm{q}\rangle.
\end{equation}
Furthermore,
\begin{equation}\small
    \mathbb{P} \left \{
    \left |
    \frac{\langle \bar{\bm{o}}, \bar{\bm{q}} \rangle}
    {\langle \bar{\bm{o}}, \bm{o}\rangle
     \langle \bar{\bm{q}}, \bm{q} \rangle}
    - \langle \bm{o}, \bm{q}\rangle
    \right|
    >  \Delta_{\bm{o},\bm{q}} \cdot \frac{\epsilon_0}{\sqrt{D - 1}}
    \right \} \leq 4e^{-c_0\epsilon_0^2},
\end{equation}
where  $\epsilon_0$ is a parameter which controls the failure probability, $c_0$ is a constant factor, and
\begin{equation}\small
    \Delta_{\bm{o},\bm{q}} =
    \sqrt{ \frac{1 - \langle \bar{\bm{q}}, \bm{q} \rangle^2}
    {\langle \bar{\bm{q}}, \bm{q} \rangle^2} }
    + \sqrt{ \frac{1 - \langle \bar{\bm{o}}, \bm{o} \rangle^2}
    {\langle \bar{\bm{o}}, \bm{o} \rangle^2
     \langle \bar{\bm{q}}, \bm{q} \rangle^2}}.
\end{equation}
The error bound can be concisely presented as
\begin{equation}\small
  \left |
  \frac{\langle \bar{\bm{o}}, \bar{\bm{q}} \rangle}
  {\langle \bar{\bm{o}}, \bm{o}\rangle
   \langle \bar{\bm{q}}, \bm{q} \rangle}
  - \langle \bm{o}, \bm{q}\rangle
  \right| = O\left(\frac{1}{\sqrt{D}}\right)
  \quad \text{with high probability}.
\end{equation}
\end{theorem}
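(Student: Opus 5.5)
The plan is to reduce the symmetric estimator to two applications of the asymmetric RaBitQ guarantee (reviewed in Appendix~\ref{sec:appendix_rabitq}), using the randomness of the rotation $P$. The standard RaBitQ fact is this: conditioned on $\langle\bar{\bm{o}},\bm{o}\rangle$, decompose $\bar{\bm{o}} = \langle\bar{\bm{o}},\bm{o}\rangle\bm{o} + \sqrt{1-\langle\bar{\bm{o}},\bm{o}\rangle^2}\,\bm{e}_{\bm{o}}$, where $\bm{e}_{\bm{o}}\perp\bm{o}$ is a unit vector. Because $P$ is a uniformly random orthogonal matrix, $\bm{e}_{\bm{o}}$ is uniformly distributed on the unit sphere of $\bm{o}^\perp$, i.e.\ on $\mathbb{S}^{D-2}$. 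The same decomposition holds for $\bar{\bm{q}}$ with $\bm{e}_{\bm{q}}\perp\bm{q}$. I will treat the two codes as obtained with independent randomization, or at least with $\bm{e}_{\bm{q}}$ symmetric conditionally on everything about $\bm{o}$. This independence assumption must be made explicit in the proof, since a shared rotation $P$ couples them.

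\textbf{Step 1 (unbiasedness).} Write the estimator as a telescoping sum through the one-sided quantity $\langle\bar{\bm{o}},\bm{q}\rangle/\langle\bar{\bm{o}},\bm{o}\rangle$:
\[
\frac{\langle\bar{\bm{o}},\bar{\bm{q}}\rangle}{\langle\bar{\bm{o}},\bm{o}\rangle\langle\bar{\bm{q}},\bm{q}\rangle}-\langle\bm{o},\bm{q}\rangle
= \underbrace{\frac{\langle\bar{\bm{o}},\bar{\bm{q}}-\langle\bar{\bm{q}},\bm{q}\rangle\bm{q}\rangle}{\langle\bar{\bm{o}},\bm{o}\rangle\langle\bar{\bm{q}},\bm{q}\rangle}}_{T_1}
+\underbrace{\frac{\langle\bar{\bm{o}},\bm{q}\rangle}{\langle\bar{\bm{o}},\bm{o}\rangle}-\langle\bm{o},\bm{q}\rangle}_{T_2}.
\]
In $T_1$, the vector $\bar{\bm{q}}-\langle\bar{\bm{q}},\bm{q}\rangle\bm{q}$ equals $\sqrt{1-\langle\bar{\bm{q}},\bm{q}\rangle^2}\,\bm{e}_{\bm{q}}$. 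Hence
\[
T_1=\sqrt{\tfrac{1-\langle\bar{\bm{q}},\bm{q}\rangle^2}{\langle\bar{\bm{q}},\bm{q}\rangle^2}}\cdot\Big\langle\tfrac{\bar{\bm{o}}}{\langle\bar{\bm{o}},\bm{o}\rangle},\bm{e}_{\bm{q}}\Big\rangle .
\]
Conditioned on $\bar{\bm{o}}$, this has mean zero because $\bm{e}_{\bm{q}}$ is symmetric on the sphere in $\bm{q}^\perp$. $T_2$ is exactly the RaBitQ asymmetric error, namely
\[
T_2=\sqrt{\tfrac{1-\langle\bar{\bm{o}},\bm{o}\rangle^2}{\langle\bar{\bm{o}},\bm{o}\rangle^2}}\,\langle\bm{e}_{\bm{o}},\bm{q}\rangle ,
\]
which also has mean zero. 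Adding the two gives $\mathbb{E}[\cdot]=\langle\bm{o},\bm{q}\rangle$.

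\textbf{Step 2 (concentration).} For a fixed vector $\bm{v}$, a uniform unit vector on $\mathbb{S}^{D-2}$ satisfies $\mathbb{P}\{|\langle\bm{e},\bm{v}\rangle|>\epsilon_0\|\bm{v}\|/\sqrt{D-1}\}\le 2e^{-c_0\epsilon_0^2}$, by standard spherical concentration (Lévy / sub-Gaussian coordinates); this is the same bound used in the RaBitQ analysis. Apply it to $T_2$ with $\bm{v}=\bm{q}$, projected onto $\bm{o}^\perp$, so $\|\bm{v}\|\le1$. This gives the second summand of $\Delta_{\bm{o},\bm{q}}$, up to the extra factor $1/\langle\bar{\bm{q}},\bm{q}\rangle$.

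For $T_1$, condition on $\bar{\bm{o}}$ and apply the same bound with $\bm{v}=\bar{\bm{o}}/\langle\bar{\bm{o}},\bm{o}\rangle$ projected onto $\bm{q}^\perp$. This is the main obstacle: its norm can be as large as $1/\langle\bar{\bm{o}},\bm{o}\rangle$, not $1$. To obtain exactly the stated $\Delta_{\bm{o},\bm{q}}$, I would regroup the decomposition differently, telescoping through $\langle\bm{o},\bar{\bm{q}}\rangle/\langle\bar{\bm{q}},\bm{q}\rangle$ instead. The first error term is then $\langle\bm{o},\bm{e}_{\bm{q}}\rangle\sqrt{1-\gamma_q^2}/\gamma_q$ with $\|\bm{o}\|=1$. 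The remainder is
\[
\frac{\langle\bar{\bm{o}}-\gamma_o\bm{o},\bar{\bm{q}}\rangle}{\gamma_o\gamma_q}
=\frac{\sqrt{1-\gamma_o^2}}{\gamma_o\gamma_q}\langle\bm{e}_{\bm{o}},\bar{\bm{q}}\rangle,
\qquad \|\bar{\bm{q}}\|=1 .
\]
These two coefficients match $\Delta_{\bm{o},\bm{q}}$ exactly. Each term exceeds its share, $\text{coefficient}\cdot\epsilon_0/\sqrt{D-1}$, with probability at most $2e^{-c_0\epsilon_0^2}$; for the second term this is obtained by conditioning on $\bar{\bm{q}}$. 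A union bound then yields the total failure probability $4e^{-c_0\epsilon_0^2}$. The same regrouping works for Step 1.

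\textbf{Step 3 ($O(1/\sqrt D)$ form).} Use the RaBitQ fact that $\langle\bar{\bm{x}},\bm{x}\rangle$ concentrates around a constant bounded away from $0$, about $0.8$ for $B=1$ and closer to $1$ for larger $B$, with high probability. Then $\Delta_{\bm{o},\bm{q}}=O(1)$, and taking $\epsilon_0$ to be a constant, or $\Theta(\sqrt{\log(1/\delta)})$, gives error $O(1/\sqrt D)$ with high probability.
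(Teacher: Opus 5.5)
Your concentration argument is the paper's argument. Appendix~\ref{sec:theorem_4.1_proof} also telescopes through $\langle\bm{o},\bar{\bm{q}}\rangle/\langle\bar{\bm{q}},\bm{q}\rangle$ and reaches the same two-term identity. It writes your $\langle\bm{o},\bm{e}_{\bm{q}}\rangle$ and $\langle\bm{e}_{\bm{o}},\bar{\bm{q}}\rangle$ through auxiliary unit vectors $\bm{e}_1\perp\bm{q}$ and $\bm{e}_2\perp\bm{o}$. It then bounds the prefactors $\sqrt{1-\langle\bm{o},\bm{q}\rangle^2}$ and $\sqrt{1-\langle\bm{o},\bar{\bm{q}}\rangle^2}$ by $1$ and union-bounds to $4e^{-c_0\epsilon_0^2}$. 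Your version is more explicit about which random direction is being concentrated, and conditionally on what.

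Unbiasedness is where you genuinely differ. The paper replaces $P$ by a scaled Gaussian matrix and reads the $B=1$ codes as SimHash sign agreements under one shared set of hyperplanes. Using $\mathbb{E}[\xi]=1-2\theta/\pi$ and $\langle\bar{\bm{o}},\bm{o}\rangle\approx\langle\bar{\bm{q}},\bm{q}\rangle\approx\sqrt{2/\pi}$, it obtains mean $\approx\pi/2-\theta$. That gives only approximate unbiasedness, with bias $t-\sin t$ for $t=\pi/2-\theta$. Your Haar-invariance argument is exact, needs no Gaussian surrogate, and covers every $B$, but only under the decoupling you flag.

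That hypothesis cannot be removed later; it is what makes the statement true. The kernel needs one shared $P$. The code-only numerator in Eq.~(\ref{eq:appendix_sym_code_dot}) uses $P^{T}P=I$ across the two codes; with independent rotations it would become $\tilde{\bm{u}}_o^{T}P_o^{T}P_q\tilde{\bm{u}}_q$, which is no longer a code-only computation.

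Under a shared $P$, both conclusions fail. Take $\bm{o}=\bm{q}$: the two codes coincide, so the estimator equals $1/\langle\bar{\bm{o}},\bm{o}\rangle^2\approx\pi/2$ when $B=1$. That is an error of about $\pi/2-1$ with high probability, not $O(1/\sqrt{D})$. The paper's own mean $\pi/2-\theta$ shows this bias, and it grows as $\theta\to 0$, i.e.\ for exactly the near-neighbor pairs compared during graph construction. Your weaker fallback also fails on this example: conditional symmetry of $\bm{e}_{\bm{q}}$ given $\bar{\bm{o}}$ (or of $\bm{e}_{\bm{o}}$ given $\bar{\bm{q}}$ for the regrouped terms) is impossible, since then $\bm{e}_{\bm{q}}=\bm{e}_{\bm{o}}$.

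So state independent randomization of the two codes as an explicit hypothesis of the theorem, and do not present the result as a guarantee for the shared-rotation kernel. The paper's concentration step tacitly needs the same decoupling, since its $X_2$ pairs the residual of $\bar{\bm{o}}$ with the direction $\bm{e}_2$ built from $\bar{\bm{q}}$.
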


\begin{proof}[Proof Sketch]
We decompose the estimation error into two residual terms and bound each using high-dimensional concentration arguments.  Unbiasedness is established using the near-orthogonality of random transformations and the expected sign agreement between quantized vectors. Full proofs are provided in Appendix~\ref{sec:theorem_4.1_proof} and Appendix~\ref{sec:theorem_4.2_proof}.
\end{proof}

\begin{remark}
    Compared with RaBitQ, \textsf{SymRaBitQ} appears to introduce only a modest mathematical modification. However, this change has important system implications.  RaBitQ accelerates query-time estimation, whereas graph construction and segment merging require a large number of distance computations over stored vectors. By moving both operands into the quantized space, \textsf{SymRaBitQ} eliminates raw-vector accesses for these data-to-data computations, extending the benefits of quantization from the query path to the entire index construction and maintenance lifecycle. Appendix~\ref{sec:appendix_rabitq} provides additional background on RaBitQ.
    \vspace{-0.5em}
\end{remark}

\subsection{Graph-based Kernel}
\label{sec:graph_based_kernel}

Large merged segments in \ByteX use a DiskANN/Vamana-style graph because the construction cost is amortized across many future queries (\S\ref{sec:preliminary}); the graph background and full pseudocode are given in Appendix~\ref{sec:appendix_graph_background}.  Our contribution is not a new graph topology.  The graph kernel keeps DiskANN's candidate discovery, robust pruning, and backward-edge repair, but changes the representation and every distance oracle used by these steps.

\para{Quantization-aware Construction.}
Before graph construction starts, each assigned vector is encoded as a \textsf{SymRaBitQ} record $z_{\bm{o}}=(\bar{\bm{o}},\bar{\bm{o}}_0,\gamma_{\bm{o}},\rho_{\bm{o}})$, where $\gamma_{\bm{o}}=\langle\bar{\bm{o}},\bm{o}\rangle$ and $\rho_{\bm{o}}=\|\bm{x}_r-\bm{c}\|$ are precomputed.  After this point, the graph builder no longer calls a full-precision distance function.  In \textsf{FindCandidates} (Algorithm~\ref{alg:greedy}), the greedy frontier is ranked by $\operatorname{SymDist}_{B}$ between encoded records.  In \textsf{RobustPrune} (Algorithm~\ref{alg:robustprune}), both target-to-candidate distances and candidate-to-candidate diversity checks also use $\operatorname{SymDist}_{B}$. Backward-edge repair uses the same pruning routine when a neighbor exceeds the degree bound.  Algorithm~\ref{alg:index} shows the complete construction flow.
This change is precisely where \textsf{SymRaBitQ} differs from a query-only quantizer.  Standard RaBitQ can score a quantized data vector against a raw query, but robust pruning needs distances among stored vectors, such as $d(\bm{o},\bm{v})$ and $d(\bm{p},\bm{v})$.  Without a symmetric estimator, these comparisons would fetch raw candidate vectors from disk or keep them in memory.  With \textsf{SymRaBitQ}, all such comparisons read only $B$-bit codes and per-vector scalars, so graph construction operates in the quantized space.

\vspace{0.1em}
\para{Quantization-aware Search.}
The query path uses the same record format. A query is encoded once for the searched partition, producing $z_{\bm{q}}=(\bar{\bm{q}},\bar{\bm{q}}_0,\gamma_{\bm{q}},\rho_{\bm{q}})$.  During greedy traversal (Algorithm~\ref{alg:graph-search}), \ByteX uses $\operatorname{SymDist}_{1}$ on the first-bit projections $\bar{\bm{o}}_0$ and $\bar{\bm{q}}_0$ to expand the graph with a bit operation plus SIMD popcount (\texttt{vpopcntq}). Only the small promising set retained by traversal is rescored with $\operatorname{SymDist}_{B}$ using the full code.  Thus construction uses the higher-precision $B$-bit estimator to preserve graph quality, while search uses a coarse-to-fine version of the same estimator to reduce memory bandwidth and CPU work.

\vspace{0.1em}
\para{Impact.}
The immediate effect is that raw vectors are removed from candidate search, pruning, backward-edge repair, and most graph-search distance computations.  The memory footprint of construction becomes proportional to compact codes plus graph metadata rather than high-dimensional floating-point vectors; random I/O to raw vectors is eliminated from the recurring merge path.  In our implementation, $B=5$ preserves more than 98\% distance-estimation accuracy and causes less than 0.01 recall loss under the same search parameters.
On GIST-1M (Table~\ref{tab:indextime} in \S \ref{sec:exp}), \textsf{SymRaBitQ} reduces peak build memory by  80\% and total build time by 60\%, demonstrating the significant benefits of quantization for index construction.

\vspace{-0.5em}
\subsection{IVF-based Kernel}
\label{sec:ivf_based_kernel}

For medium-size segments (\S\ref{sec:preliminary}), \ByteX builds an IVF index because it is cheap to construct and the segment may still participate in later merges.  IVF background is given in Appendix~\ref{sec:appendix_ivf_background}; the key point in the vector kernel is that IVF list construction and list scanning also use \textsf{SymRaBitQ} as the distance primitive.

\para{Index primitives.}
\ByteX first quantizes all vectors with \textsf{SymRaBitQ} using the segment centroid. The entire $k$-means clustering procedure then operates in the quantized space via $\operatorname{SymDist}_{B}$. The quantized codes $(\bar{\bm{o}},\bar{\bm{o}}_0,\gamma_{\bm{o}},\rho_{\bm{o}})$  are stored in their assigned inverted lists, while full-precision vectors remain on disk and are no longer needed for list construction or later segment merge.  This keeps medium-segment indexing lightweight while preserving the same compressed representation used by graph segments.


\para{Search primitives.}
At query time, \ByteX encodes the query with \textsf{SymRaBitQ}
and probes the closest $n_{\mathit{probe}}$ centroids in the IVF index.  For each probed centroid, 
we evaluate $\operatorname{SymDist}_{1}$ on 1-bit codes to obtain coarse distances using \texttt{vpopcntq}.  Candidates whose lower bounds are already worse than the current $k$-th result are discarded without touching the remaining bits. The survivors are refined with $\operatorname{SymDist}_{B}$ using FastScan-style lookup tables~\cite{DBLP:journals/pvldb/AndreKS15,DBLP:conf/mir/AndreKS17,DBLP:journals/pami/AndreKS21}, and the top-$k$ heap is updated.  In this path, \textsf{SymRaBitQ} improves both bandwidth and CPU efficiency: most list entries are judged by compact 1-bit codes, and only promising entries pay for the full $B$-bit estimate.

\section{Vector Storage Engine}

\ByteX's storage engine manages the lifecycle of vector segments beneath the OpenSearch-compatible interface. At a high level, it makes newly written segments searchable at refresh time, bounds query fanout through background merges, upgrades per-segment indexes from scan to IVF and then to graph-based search, and places each segment on memory-only, hybrid memory--disk, or disk-resident serving paths. We inherit OpenSearch's data layout; the ingestion path is described in Appendix~\ref{sec:appendix_write}.
   
The rest of this section focuses on the optimizing of serving disk-resident segments inspired by \cite{zhao2026veloann}. Sec.~\ref{sec:cache_management} introduce fine-grained cache mechanism that largely reduces disk I/Os. Sec.~\ref{sec:async_exec_model} describes the asynchronous execution model that reduces tail latency by overlapping graph traversal with SSD reads. 


\label{sec:storage_engine}
\label{sec:ssd_vector_serving}
\begin{figure}[t]
  \centering
  \includegraphics[width=0.48\textwidth]{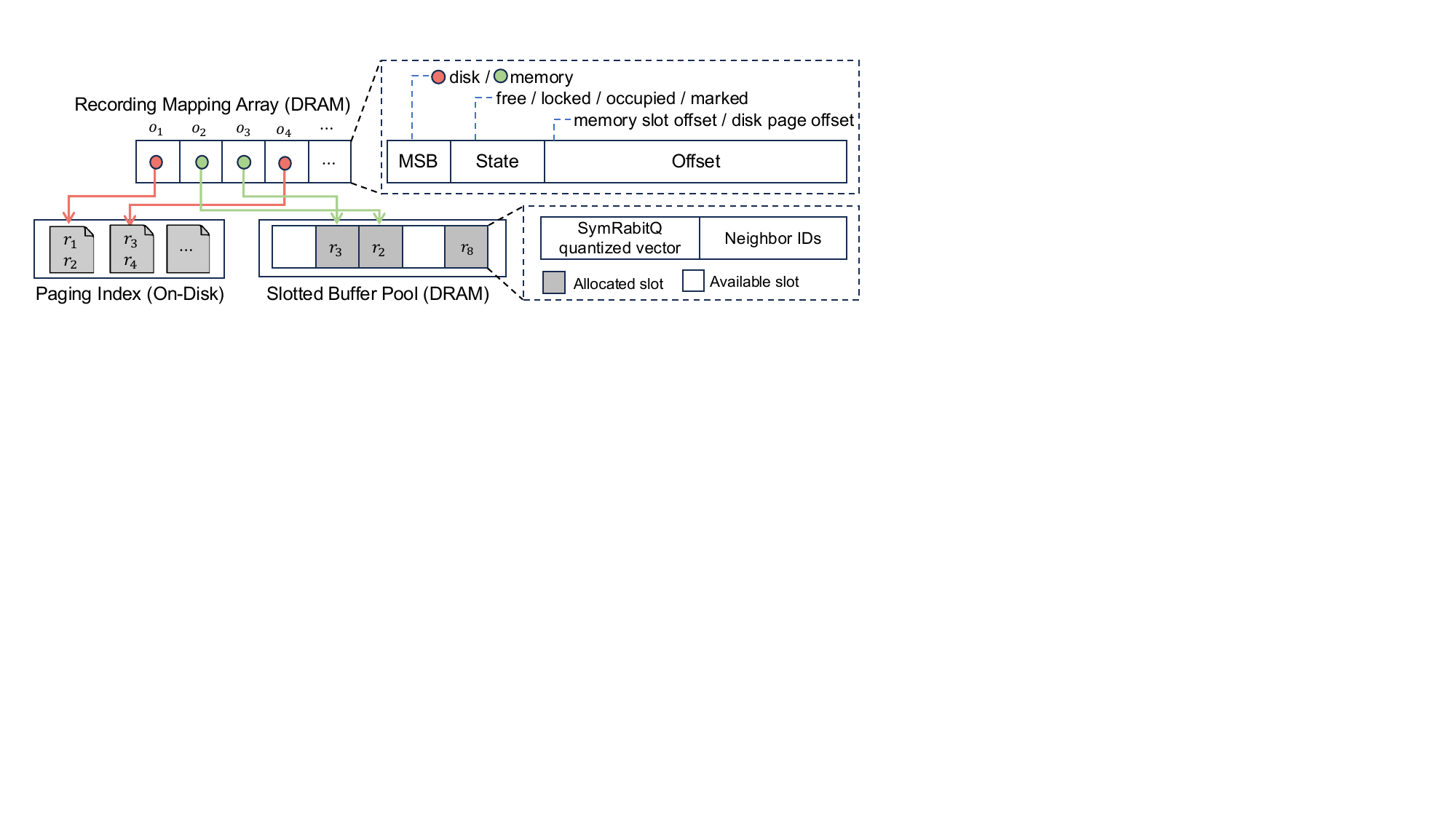}
  \caption{Record-level cache layout}
  \label{fig:cache_layout}
\end{figure}

\subsection{Fine-Grained Record Cache}
\label{sec:cache_management}
Given the cache-efficiency problem,  \ByteX{} must spend limited DRAM on data that actually advances graph traversal.  Traditional page-granular caching is too coarse for graph search: it can spend DRAM on incidental bytes that do not advance traversal.  \ByteX{} therefore employs fine-grained, record-level caching.  Each record stores the compact vector payload for distance estimation and a bounded neighbor list for traversal.  As shown in Fig.~\ref{fig:cache_layout}, a dense record-mapping array indexed by vertex ID gives $O(1)$ access to each record through pointer-swizzling-style logical pointers~\cite{leis2024leanstore}.  Each entry of the record-mapping array is a 64-bit logical pointer consisting of a residency bit, a 2-bit state field, and a 61-bit offset.  An MSB of 0 indicates that the record is on disk and the offset is the disk-page offset; an MSB of 1 indicates that the record is cached in memory and the offset is the slot offset in the slotted buffer region.  The state field records transient cache states, which \ByteX uses to coordinate record loading, prefetching, and eviction under concurrent queries.
To manage these operations, \ByteX uses a lightweight lock-free protocol.  On a miss, the requester locks the disk-resident mapping entry, obtains a free slot or runs clock-style eviction, issues a non-blocking SSD read, copies the fetched record into the slot, and publishes the memory-slot offset as an occupied resident record.  Concurrent demand loads or prefetches that observe the disk entry in the locked state wait for the same load to complete rather than issuing duplicate reads.  During eviction, the clock hand first demotes occupied resident entries to marked; a later access promotes the marked entry back to occupied, while an unaccessed marked entry can be locked, its slot reclaimed and returned to the free list, and its mapping entry restored to the disk location.

\vspace{-0.5em}
\subsection{Asynchronous ANN Execution}
\label{sec:async_exec_model}
Fine-grained caching improves DRAM utilization, but cache misses can still occur during graph traversal.  \ByteX treats those misses as scheduling points inside the SSD-resident vector operator.  Each ANN search runs as a coroutine that carries its candidate frontier, visited vertices, and current top-$k$ results.  The traversal is cache-aware at the candidate-beam level: it expands resident candidates first when they are available, while issuing asynchronous reads or prefetches for promising on-disk candidates.  If no ready candidate can make progress, the coroutine yields after submitting the required SSD read.  The worker then advances another ready ANN coroutine and resumes the suspended search when the read completes.  This design complements the record cache: the cache reduces avoidable SSD traffic, while coroutine scheduling overlaps the remaining SSD latency with useful vector-search computation.

\section{Hybrid Query Engine}
\label{sec:query_engine}
\label{sec:complex_query_plan}
\label{sec:beyond_ann_search}

Production workloads (\S \ref{sec:preliminary}) are rarely plain top-$k$ ANN retrieval: they often combine vector similarity with lexical matching, structured predicates, ranking, reranking, and threshold-based retrieval. To support diverse workloads efficiently, \ByteX integrates vector search into an OpenSearch-compatible distributed query engine. The coordinator parses the DSL, routes the request to shards, and composes vector operators with the rest of the search pipeline. We refer the reader to Appendix.~\ref{sec:appendix_read} and Appendix~\ref{sec:appendix_dsl} for query path details and example DSL that enables hybrid query. For the rest of this section, we focus on three concrete techniques that enhance three kinds of specific hybrid queries.



\vspace{0.2em}
\para{Vector + Full-text Search.} Production workloads often benefit from combining lexical matching, which excels at exact term retrieval, \eg searching logs by error codes, with vector search that captures semantic relationships beyond literal expressions.\ByteX provides flexible hybrid retrieval within a unified query framework, where query terms are processed through Lucene's inverted index with BM25 scoring and query embeddings are evaluated by the vector search engine. Users can choose from multiple score normalization strategies, including \texttt{min-max}, \texttt{L2}, \texttt{z-score}, and reciprocal rank fusion (\texttt{RRF}), as well as several score fusion methods, such as \texttt{weighted arithmetic mean}, \texttt{geometric mean}, and \texttt{harmonic mean}. These configurable options allow fine-grained control over the balance between lexical and semantic signals, enabling retrieval behavior to be tailored to different applications and relevance objectives.




\vspace{0.2em}
\para{Radial Search.}
Radial search is useful when applications specify similarity or distance constraints rather than a fixed top-$k$. Recommendation workloads, \eg Customer C, often require retrieving results within a relevance threshold (\eg \texttt{max\_distance} or \texttt{min\_score}) while excluding near-duplicates via an exclusion threshold (\eg \texttt{min\_distance} or \texttt{max\_score}). To efficiently support these query patterns, \ByteX designs a \textit{threshold-guided neighborhood expansion} strategy on graph-based indexes. The search starts from an entry point obtained via a top-1 ANN lookup and expands neighbors in a breadth-first manner, interleaving threshold evaluation with traversal. Nodes satisfying the bound are emitted and recursively explored, while others are pruned early. The 
search terminates when the frontier is exhausted or when a system-defined exploration limit is reached, ensuring bounded work while preserving recall within the specified range.

\vspace{0.2em}
\para{Filtered Vector Search.}
Production workloads often include structured predicates such as tenants, categories, tags, timestamp ranges, and access-control policies. \ByteX supports three filtering modes: pre-filtering, post-filtering, and in-filtering. Pre-filtering applies predicates before vector search and is effective for highly selective conditions, while post-filtering evaluates predicates after ANN retrieval and is suitable for low-selectivity predicates. When neither strategy is effective, \ByteX enables an in-filter execution mode. The predicate is first materialized as a bitmap over valid vector IDs, and ANN search retrieves a small set of high-similarity seeds. The vector operator then expands from these seeds while checking the bitmap on the fly, admitting only qualifying candidates. Expansion continues until sufficient valid results are collected.

\section{Experiment}
\label{sec:exp}
We evaluate \ByteX using public benchmarks and production-derived workloads on a controlled testbed. Production deployment results are presented in Sec.~\ref{sec:preliminary}.
Specifically, our evaluation answers six production-facing questions:
\begin{itemize}[leftmargin=18pt,itemsep=1pt,topsep=2pt]
    \item[\textbf{Q1}:] Can quantization-aware vector kernel in \ByteX reduce recurring indexing cost, improve the QPS--recall frontier, and lower tail latency? (\S\ref{sec:Vector Search Performance})
    \item[\textbf{Q2}:] Can a unified engine efficiently support filtered, hybrid, and radial retrieval? (\S \ref{sec: Complex Query Performance})
    \item[\textbf{Q3}:] Does \ByteX improve the operating-cost envelope? (\S \ref{sec: Operating Costs})
    \item[\textbf{Q4}:] Can the storage runtime mask SSD stalls under constrained memory and smoothly transition between disk-resident and memory-resident serving? (\S\ref{sec:storage_runtime_analysis})
    \item[\textbf{Q5}:] Does \ByteX scale with data size and cluster size? (\S \ref{sec:scalability_elasticity})
    \item[\textbf{Q6}:] Can \ByteX sustain stable read--write performance during prolonged streaming ingestion? (\S \ref{sec:streaming_microbenchmark})
\end{itemize}

\subsection{Experiment Setup}
\para{Datasets.} We use one production-derived dataset and six public datasets. The production-derived dataset is a 100M-vector sample from Customer~D, the 327B-vector deployment described in Sec.~\ref{sec:preliminary}. Together with Wiki (1M, 768D, cosine) \cite{wiki}, GIST (1M, 960D, L2) \cite{gist}, and Cohere (113M, 1024D, L2) \cite{cohere}, these datasets are used to evaluate throughput, latency, recall, and index construction cost. For end-to-end
hybrid retrieval quality, we use NFCorpus, FIQA, and Quora from a popular benchmark BEIR \cite{DBLP:conf/nips/Thakur0RSG21, DBLP:conf/sigir/KamallooTLMYL24}.

\begin{table}[t]
\small
\centering
\vspace{0.5em}
\caption{Build Time and Peak Memory.}
\label{tab:indextime}
\renewcommand{\arraystretch}{1.3}
\resizebox{1\linewidth}{!}{
\begin{tabular}{l | l | c | c | c | c}
\specialrule{0.1em}{0pt}{2pt}
Variant
& Dataset
& Indexing Time
& Reduction
& Peak Mem.
& Reduction \\
\midrule

\multirow{2}{*}{HNSW}
& \cellA Cohere 1M
& \cellA {359.8 s}
& \cellA {--}
& \cellA {4.8 GB}
& \cellA {--} \\
\cmidrule(lr){2-6}

& \cellB Cohere 100M
& \cellB {41946.2 s}
& \cellB {--}
& \cellB {476.19 GB}
& \cellB {--} \\

\midrule

\multirow{2}{*}{DiskANN}
& \cellA GIST 1M
& \cellA {587 s}
& \cellA {--}
& \cellA {3.75 GB}
& \cellA {--} \\
\cmidrule(lr){2-6}

& \cellB Cohere 100M
& \cellB {23826 s}
& \cellB {--}
& \cellB {453 GB}
& \cellB {--} \\

\midrule

\multirow{2}{*}{\ByteX}
& \cellA GIST 1M
& \cellA \textbf{194+43 s}
& \cellA \textbf{$\downarrow$ 60\%}
& \cellA \textbf{0.75 GB}
& \cellA \textbf{$\downarrow$ 80\%} \\
\cmidrule(lr){2-6}

& \cellB Cohere 100M
& \cellB \textbf{11033+1463 s}
& \cellB \textbf{$\downarrow$ 48\%}
& \cellB \textbf{90 GB}
& \cellB \textbf{$\downarrow$ 80\%} \\

\specialrule{0.1em}{0pt}{2pt}
\end{tabular}
}
\end{table}

\para{Competitors.} We compare our system with three well-known specialized and generalized vector databases: 
Elasticsearch 8.18\footnote{https://github.com/elastic/elasticsearch}, 
Milvus 2.5\footnote{https://github.com/milvus-io/milvus}, and 
PostgreSQL 17\footnote{http://www.postgresql.org/} with the VectorChord 1.0 extension\footnote{https://github.com/tensorchord/VectorChord/}.
Elasticsearch is a distributed search and analytics engine that has recently incorporated vector search via an HNSW-based index with BBQ \cite{bbq}.
Milvus is a widely used specialized vector database, providing high-performance ANN search with multiple indexing algorithms (\eg IVF, HNSW, SQ). 
PostgreSQL with VectorChord represents the trend of integrating vector search into relational databases, enabling ANN search within SQL-based hybrid workloads.

\para{Environment and Configurations.}
Our controlled benchmarks run on machines equipped with Intel(R) Xeon(R) Platinum 8457C CPUs. \ByteX is deployed as an OpenSearch-compatible cluster with three data nodes and two coordinator nodes. For smaller-scale datasets (Wiki and GIST), each data node uses 8 CPU cores, 64 GB of memory, and 1 TB of network-attached storage, and each coordinator node uses 8 CPU cores and 16 GB of memory. For large-scale datasets (Cohere and In-house), each data node uses 24 CPU cores, 192 GB of memory, and 2 TB of network-attached storage, and each coordinator node uses 16 CPU cores and 32 GB of memory. Storage is connected via 10 Gigabit Ethernet. Baseline systems are provisioned under comparable resource budgets and tuned according to their recommended practices. Unless otherwise stated, all results use the default per-system resource and index settings reported in Appendix~\ref{sec:machine_settings}.

\para{Evaluation Metrics.}
We report Queries Per Second (QPS) latency, and Recall@$k$ for query performance, and normalized discounted cumulative gain (NDCG) for end-to-end ranking quality. Recall is defined as
\begin{equation}
\small 
Recall@k = \frac{|R \cap R^*|}{k},    
\end{equation}
where $R$ denotes the returned top-$k$ results and $R^*$ denotes the exact top-$k$ results for a query. NDCG is defined as

\begin{equation}
\small
   \mathit{NDCG@k} = \frac{\mathit{DCG@k}}{\mathit{IDCG@k}},
\quad
\mathit{DCG@k} = \mathrm{rel}_1 + \sum_{i=2}^{k} \frac{\mathrm{rel}_i}{\log_2(i)},
\end{equation}
where $\mathrm{rel}_i$ denotes the relevance score of the result at rank $i$, and $\mathit{IDCG@k}$ is the maximum possible $DCG@k$ under an ideal ranking.
We measure indexing performance by construction time, peak memory consumption, and index size. We also report query cost per million queries and estimated in-house storage cost for cost-effectiveness.

\subsection{Indexing \& Vector Search Performance}
\label{sec:Vector Search Performance}

This subsection evaluates whether \ByteX reduces index-construction cost and improves high-recall serving efficiency.








\begin{figure}[t]
  \centering
  \includegraphics[width=\linewidth]{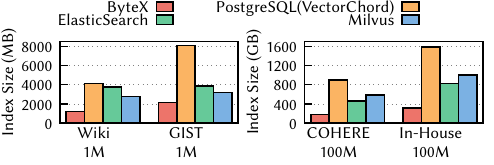}
  \caption{Index Size}
  \label{fig:index_size}
\end{figure}

\vspace{0.2em}
\para{Indexing.}
We evaluate the recurring cost of index construction along three dimensions: build time, peak memory, and final index footprint. Table~\ref{tab:indextime} isolates the effect of \texttt{SymRaBitQ} on the construction path by comparing \ByteX's quantization-aware construction with DiskANN and HNSW, the graph-construction backends used by systems such as Milvus and OpenSearch/Elasticsearch. DiskANN builds the graph over full-precision vectors, whereas \ByteX constructs the graph directly in the quantized space. On GIST 1M, \texttt{SymRaBitQ} reduces build time from 587 s to 237 s ($60\%$). Of the 237 s, 43 s is spent on quantization and the remaining 194 s on graph construction in the quantized space. Peak memory is simultaneously reduced from 3.75 GB to 0.75 GB ($80\%$). The same effect holds at production scale: on Cohere 100M, build time drops from 23{,}826 s to 12{,}496 s ($48\%$), while peak memory drops from 453 GB to 90 GB ($80\%$). These results show that \texttt{SymRaBitQ} is not merely a query-time compression technique; it moves quantization into the recurring build and merge path, cutting the memory spikes that contend with online serving during continuous ingestion. 

The effect is not limited to graph construction. \textsf{SymRabitQ} also reduces the cost of IVF indexing. On GIST 1M, conventional IVF requires 176 s and 3.7 GB peak memory, while \texttt{SymRaBitQ}-enhanced IVF reduces build time to 129 s (27\%) and peak memory to 1.9 GB (49\%). Since IVF shows similar behavior, we omit further IVF results and focus subsequent experiments on graph indexing, which dominates both indexing complexity and serving performance in \ByteX.
\begin{figure*}[!t]
    \centering

    \begin{minipage}[t]{\textwidth}
        \centering
        \includegraphics[width=\linewidth]{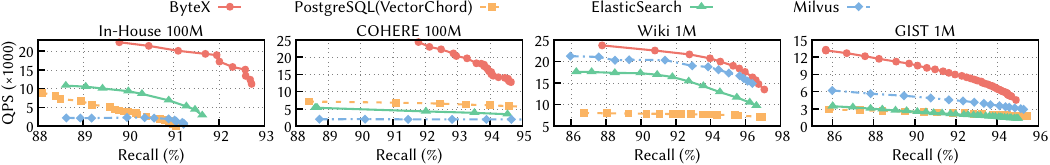}
        \vspace{-3ex}
        \caption{QPS--Recall Trade-off Overview.}
        \label{fig:overall_performance}
    \end{minipage}

    \vspace{0.7em}

    \begin{minipage}[t]{\textwidth}
        \centering
        \includegraphics[width=\linewidth]{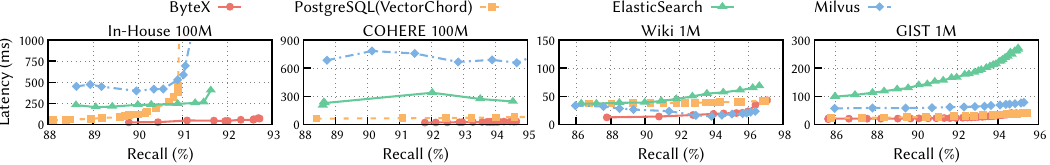}
        \vspace{-3ex}
        \caption{Concurrent P99 Latency--Recall Trade-off.}
        \vspace{-2ex}
        \label{fig:latency-recall-conc-p99}
    \end{minipage}

\end{figure*}
\begin{figure*}[t]
  \centering
  \includegraphics[width=\linewidth]{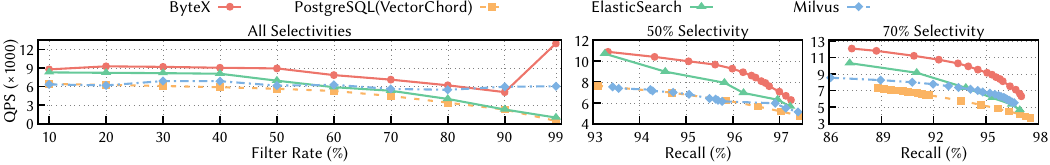}
  \vspace{-2ex}
  \caption{Filtered Search Performance Overview on Wiki 1M}
  \vspace{-2ex}
  \label{fig:filter_performance}
\end{figure*}

Fig.~\ref{fig:index_size} further shows that \ByteX achieves the smallest end-to-end index footprint across datasets. The reason is architectural: because \ByteX constructs a high-quality graph directly over high-precision quantized vectors, full-precision vectors are no longer needed in the hot serving path and are kept only in remote storage for durability and offline maintenance. On Wiki, \ByteX's index is 1{,}182 MB, only $29\%$ of PostgreSQL and $42\%$ of Milvus, the most space-efficient baseline. The trend persists on the 100M-scale datasets, where retaining full-precision or weakly compressed vectors directly increases serving footprint and storage cost.

\vspace{0.2em}
\para{Vector Search.}
We next evaluate whether \ByteX improves the throughput--recall and tail-latency--recall frontiers. Since recall above $90\%$ is a common target in our production workloads, we focus on this high-recall regime and compare against Elasticsearch, Milvus, and PostgreSQL/VectorChord.

Fig.~\ref{fig:overall_performance} shows that \ByteX moves the QPS--recall frontier rather than improving only one operating point. On the in-house 100M dataset, \ByteX reaches 21{,}445 QPS at around $90\%$ recall@10, which is 9.5$\times$ higher than Milvus, 5.5$\times$ higher than PostgreSQL, and 2.3$\times$ higher than Elasticsearch. As recall increases, the baselines degrade rapidly around $92\%$ recall, while \ByteX still sustains over 10K QPS. On Cohere, \ByteX reaches 23{,}013 QPS at around $92\%$ recall@10, outperforming Milvus by 11.7$\times$, Elasticsearch by 5.4$\times$, and PostgreSQL by 3.5$\times$.

\vspace{0.1em}
On smaller datasets such as GIST 1M and Wiki 1M, the gap narrows because most systems can keep a larger fraction of the working set in memory and avoid frequent disk I/O. Nevertheless, \ByteX remains on the best QPS--recall frontier. On GIST, it is up to 5$\times$ faster than PostgreSQL and more than 2$\times$ faster than the second-best Milvus configuration. On Wiki, it reaches nearly 25K QPS at $87\%$ recall and still sustains around 15K QPS at $97\%$ recall. Although the second-best baseline changes across datasets, \ByteX consistently provides the best frontier.

\vspace{0.1em}
The improvement comes from three design choices. First, \ByteX uses a vector-native execution path integrated into the OpenSearch query pipeline, avoiding redundant data movement and cross-engine coordination. Second, \ByteX adopt a quantization aware vector kernel, where distance computation is performed in the quantized space using fast scan operations such as \texttt{vpopcnt}, reducing per-query computation cost. Third, the quantization scheme preserves distance fidelity well enough to sustain high recall under aggressive pruning and high-QPS settings. Fig.~\ref{fig:latency-recall-conc-p99} shows the corresponding concurrent p99 latency results: at comparable recall, \ByteX also maintains lower tail latency than the baselines. Together, the throughput and latency results show that \ByteX improves the high-recall serving frontier on both axes. Additional average, p95, and serial-latency results are reported in Appendix~\ref{sec:appendix_latency}.

\subsection{Complex Query Performance}
\label{sec: Complex Query Performance}
This subsection evaluates whether \ByteX can efficiently support diverse retrieval workloads. We focus on three representative query patterns: filtered search (vector + attribute filters), hybrid search (vector + lexical matching), and radial search (similarity-threshold retrieval).

\vspace{0.2em}
\para{Filtered Search.} Fig.~\ref{fig:filter_performance} reports QPS at approximately 96\% recall@10 as predicate selectivity varies from 10\% to 99\%. We make four observations. First, \ByteX consistently achieves the highest throughput across the entire selectivity spectrum. Second, all systems experience throughput degradation relative to pure vector search. Although scalar filtering itself is inexpensive, its overhead becomes significant when vector search latency is already only a few milliseconds. Third, throughput decreases as predicates become more selective. With fewer candidates satisfying the predicate, ANN search must explore a larger portion of the graph to collect enough valid results. Finally, at the extreme selectivity of 99\%, \ByteX exhibits a noticeable throughput advantage. This is because \ByteX automatically switches to pre-filtering when the estimated candidate set falls below a configurable threshold (10{,}000 in this experiment).
The middle and right panels provide a finer-grained comparison at 50\% and 70\% selectivity. Across all recall levels, \ByteX consistently delivers higher throughput than competing systems, demonstrating the effectiveness of its adaptive filtering strategy.



\begin{table}[t]
\small
\centering
\setlength{\tabcolsep}{8pt}
\caption{NDCG@10 Results with Different Fusion Strategies.}
\label{tab:ndcg10-only}
\renewcommand{\arraystretch}{1.3}
\resizebox{1\linewidth}{!}{
\begin{tabular}{l | c | c | c | c | c | c}
\specialrule{0.1em}{0pt}{2pt}
Dataset
& BM25
& Dense
& Min-Max
& RRF
& Z-score
& $\Delta(\%)$ \\
\midrule

\rowcolor{rowA}
NFCorpus
& 0.3040
& {0.2321}
& {0.3271}
& \textbf{0.3327}
& 0.3288
& \textbf{+43.3\%} \\
\midrule

\rowcolor{rowB}
FIQA
& 0.2386
& {0.1983}
& 0.2890
& 0.3071
& \textbf{0.3281}
& \textbf{+65.5\%} \\
\midrule

\rowcolor{rowC}
Quora
& {0.7412}
& 0.7469
& 0.8189
& 0.8300
& \textbf{0.8666}
& \textbf{+16.9\%} \\
\specialrule{0.1em}{0pt}{2pt}
\end{tabular}
}
\end{table}

\begin{figure}[t]
  \centering
  \vspace{1em}
  \includegraphics[width=\linewidth]{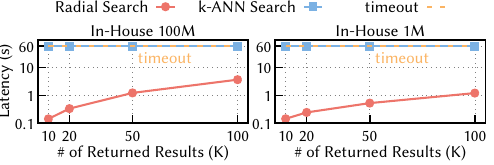}
  \caption{Radial Search Performance Overview}
  \label{fig:range-performance}
\end{figure}

\vspace{0.2em}
\para{Hybrid Search.} 
Table~\ref{tab:ndcg10-only} evaluates end-to-end retrieval quality by comparing traditional lexical retrieval (BM25), dense vector retrieval, and three score-fusion strategies supported by \ByteX's query engine in Sec.~\ref{sec:beyond_ann_search}. The results show that lexical and semantic signals are complementary: fusion consistently outperforms BM25 and dense retrieval alone. No single normalization strategy dominates across all datasets. Z-score fusion performs best on FIQA and Quora, improving NDCG@10 by 65.5\% and 16.9\%, respectively, while RRF performs best on NFCorpus with a 43.3\% improvement.

\vspace{0.2em}
\para{Radial Search.}  This experiment evaluates the radial search algorithm introduced in Sec.~\ref{sec:beyond_ann_search}. As shown in Fig.~\ref{fig:range-performance}, we vary the radius so that the search region contains approximately $\{10\text{k}, 20\text{k}, 50\text{k}, 100\text{k}\}$ vectors, and evaluate both the 100M in-house dataset and a random 1M subset. A naive implementation that emulates radial search with kNN at $k \in \{10\text{k}, 20\text{k}, 50\text{k}, 100\text{k}\}$ does not finish within the one-minute timeout.  In contrast, \ByteX's threshold-guided neighborhood expansion returns 10K results in 140 ms on both datasets.  Even at 100K returned results, latency remains practical: 3.68s on the 100M dataset and 1.2s on the 1M subset.  These results show that radial search can support large-range queries without reducing them to repeated large-$k$ ANN calls.

\subsection{Operating Costs}
\label{sec: Operating Costs}

Table~\ref{tab:price} reports the monthly operating cost for serving the 100M 2048D dataset, using the deployment configuration in Appendix ~\ref{sec:cost_deployment_config}. We report cost per 1M vectors for keeping the dataset online and cost per 1K QPS for buying serving throughput. In the memory-only case, all systems use the same monthly budget of $2790$: Milvus, PG, and ES reach 2262, 3929, and 9342 QPS, respectively, while \ByteX reaches 21445 QPS, improving over ES by $2.3\times$ and reducing serving cost to 130 per 1K QPS. In the disk-only case, \ByteX targets minimum standing cost rather than peak throughput, reducing monthly cost to $382$ and cost per 1M vectors to 3.7, which is $7.5\times$ lower than the memory-only baselines, while still serving 63 QPS for low-duty-cycle or archival workloads. Overall, the experiment shows that \ByteX is cost-saving in both end of operating modes: maximizes throughput under a fixed budget or minimizes the cost of keeping a large vector collection online.



\begin{table}[t]
\small
\centering
\caption{Monthly operating cost (USD) on In-House 100M.}
\label{tab:price}
\renewcommand{\arraystretch}{1.3}
\setlength{\tabcolsep}{4pt}
\resizebox{\linewidth}{!}{
\begin{tabular}{l | r | r | r | r}
\toprule
Method & QPS & Monthly Cost & Cost / 1M Vecs & Cost / 1K QPS \\
\midrule

\rowcolor{rowB}
Milvus (self-managed)
& 2262
& 2790
& 27.8
& 1233 \\

\midrule
\rowcolor{rowB}
PG (self-managed)
& 3929
& 2790
& 27.8
& 710 \\

\midrule
\rowcolor{rowB}
ES (self-managed)
& 9342
& 2790
& 27.8
& 299 \\

\midrule
\rowcolor{rowA}
\ByteX (Memory-only)
& \textbf{21445}
& 2790
& 27.8
& \textbf{130} \\

\midrule
\rowcolor{rowC}
\ByteX (Disk-only)
& 63
& \textbf{382}
& \textbf{3.7}
& 6063 \\

\bottomrule
\end{tabular}
}
\end{table}





\subsection{Storage Runtime Analysis}
\label{sec:storage_runtime_analysis}
This subsection evaluates \ByteX's storage runtime for SSD-resident vector search using controlled ablations. Experiments are conducted on a single 48C371GiB serving, fixing recall at 90\%. We isolate the effects of record-level caching and asynchronous execution, and then vary the buffer-pool ratio to measure the smooth memory--throughput transition.

\vspace{0.2em}
\para{Record-Level Cache Effectiveness.}
We first evaluate whether the record-level cache described in Sec.~\ref{sec:cache_management} reduces SSD traffic during graph traversal. Fig.~\ref{fig:async} and Fig.~\ref{fig:iosavings} compare DiskANN with \ByteX(-sync), which enables the 20\%-size record cache. \ByteX-sync consistently outperforms DiskANN as the thread count increases, indicating that record-level caching effectively eliminates a large fraction of random SSD accesses.
Fig.~\ref{fig:iosavings} quantifies the resulting I/O savings. \ByteX achieves a 97.7\% cache hit rate on GIST. Although each query visits 105.99 graph records on average, only 51.90 SSD accesses per query are required (I/O misses + prefetches), compared to 80.9 for DiskANN. Similar benefits are observed on the in-house dataset, indicating that the effectiveness of record-level caching persists at larger scales.

\vspace{0.2em}
\para{Asynchronous Execution Effectiveness.}
Fig.~\ref{fig:async} evaluates the impact of asynchronous execution. \ByteX-async issues graph-record fetches asynchronously and switches to other ready query coroutines while I/O is outstanding, whereas \ByteX-sync uses the same cache mechanism but waits synchronously on cache misses. As the number of worker threads increases, \ByteX-async scales substantially better than \ByteX-sync. At 32 threads, it achieves roughly 15K QPS, about 1.8$\times$ higher than \ByteX-sync and 2$\times$ higher than DiskANN. At the same time, \ByteX-async maintains an average latency of approximately 1.7 ms, while \ByteX-sync and DiskANN increase to about 3.5 ms and 4 ms, respectively. These results show that asynchronous execution complements the record cache by turning remaining graph-record I/O stalls into useful inter-query overlap.

\vspace{0.2em}
\para{Smooth Memory--Throughput Transition.}
We further evaluate whether buffer-pool sizing provides a continuous performance knob across storage configurations. We vary the buffer-pool ratio from 0\% to 100\% on the in-house dataset, using a fixed thread count of 12.
 As shown in Fig.~\ref{fig:cache}, increasing the buffer ratio from 0\% to 10\% improves throughput from 4{,}163 to 5{,}381 QPS (1.29$\times$), showing that a small memory budget can already absorb many hot-record accesses in SSD-resident workloads. At a 10\% buffer ratio, \ByteX reaches roughly 70\% of the peak throughput at 100\% (7{,}701 QPS). Throughput then improves monotonically as the buffer ratio grows, giving operators a smooth path from capacity-oriented SSD-resident deployments toward memory-resident high-QPS deployments without changing the query interface or rebuilding the index.

\begin{figure}[t]
  \centering
  \includegraphics[width=\linewidth]{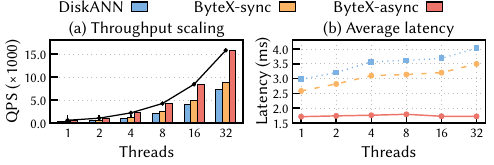}
  \caption{Benefits of Cache and Async }
  \label{fig:async}
\end{figure}
\begin{figure}[t]
    \centering
    \begin{minipage}[t]{1.5in}
        \centering
        \includegraphics[width=\linewidth]{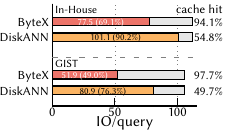}
        \caption{I/O Savings}
        \label{fig:iosavings}
    \end{minipage}%
    \hfill
    \begin{minipage}[t]{1.7in}
        \centering
        \includegraphics[width=\linewidth]{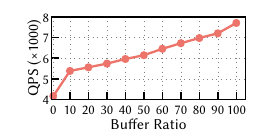}
        \caption{Smooth Transition}
        \label{fig:cache}
    \end{minipage}%
\end{figure}

\vspace{-0.5em}
\subsection{Operational Scaling Behavior}
\label{sec:scalability_elasticity}
This subsection evaluates whether \ByteX provides predictable serving performance as data volume and cluster resources change. 

\vspace{0.2em}
\para{Data-Scale Scalability.}
We first examine how \ByteX scales as the indexed data grows. We construct datasets by randomly sampling $\{1\text{M}, 10\text{M}, 20\text{M},  \ldots, 100\text{M}\}$ vectors from the in-house collection while keeping the serving configuration unchanged. For each data scale, index parameters are tuned to meet the same 90\% recall target, and the corresponding QPS is measured. As shown in Fig.~\ref{fig:scalability}, QPS decreases gradually as the dataset grows, reflecting the increased cost of candidate selection and distance evaluation. Notably, the degradation remains smooth and approximately logarithmic: when the dataset size increases by two orders of magnitude, from 1M to 100M vectors, QPS decreases by only 9\%, from 24{,}410 to 22{,}386. This behavior is consistent with the logarithmic scaling characteristics of graph-based similarity search and provides a predictable performance envelope as collections grow.

\vspace{0.2em}
\para{Node-Scale Elasticity.}
Fig.~\ref{fig:elasticity} further evaluates whether \ByteX can translate additional data nodes into higher serving throughput. We vary the number of data nodes from 1 to 10 on the in-house dataset and measure the QPS achieved at the same recall target. As shown in Fig.~\ref{fig:elasticity}, \ByteX achieves near-linear throughput growth as more data nodes are provisioned. This trend is expected given \ByteX’s OpenSearch-based architecture: queries are coordinated and executed in parallel across shards hosted on data nodes, and each additional data node contributes additional CPU, memory, and I/O bandwidth to process its share of shard-level ANN search. With balanced shard allocation and stable per-shard index structures, the workload is effectively partitioned and processed independently. As a result, cluster-level throughput scales approximately proportionally with the number of data nodes, until secondary bottlenecks (\eg coordination overhead or network bandwidth) begin to dominate.

\begin{figure}[t]
    \centering
    \begin{minipage}[t]{1.6in}
        \centering
        \includegraphics[width=\linewidth]{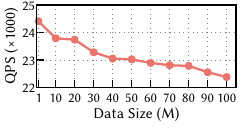}
        \vspace{-1.5em}
        \caption{Scalability}
        \label{fig:scalability}
    \end{minipage}%
    \hfill
    \begin{minipage}[t]{1.6in}
        \centering
        \includegraphics[width=\linewidth]{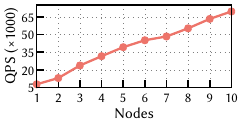}
        \vspace{-1.5em}
        \caption{Elasticity}
        \label{fig:elasticity}
    \end{minipage}%
\end{figure}

\begin{figure}[t]
    \centering
    \begin{minipage}[t]{1.6in}
        \centering
        \includegraphics[width=\linewidth]{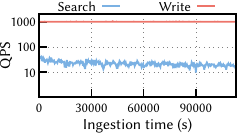}
        \caption{Streaming QPS}
        \label{fig:streamingqps}
    \end{minipage}%
    \hfill
    \begin{minipage}[t]{1.6in}
        \centering
        \includegraphics[width=\linewidth]{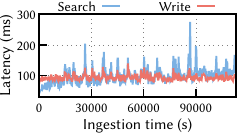}
        \caption{Streaming P99}
        \label{fig:streamingp99latency}
    \end{minipage}%
\end{figure}
\subsection{Streaming Micro-benchmark}
\label{sec:streaming_microbenchmark}
We finally evaluate whether \ByteX can sustain stable serving performance during continuous ingestion. We run a 30-hour micro-benchmark on Cohere: vectors are inserted from an empty index at a fixed rate of 1{,}000 vectors/s, while read queries are issued concurrently under fixed system configurations and index parameters. We continuously monitor query throughput, ingestion throughput, and P99 read/write latencies throughout the run. As shown in Fig.~\ref{fig:streamingqps}, \ByteX sustains the configured ingestion rate throughout the run while keeping query QPS stable, showing that background indexing and segment maintenance do not noticeably disrupt foreground search. Fig.~\ref{fig:streamingp99latency} reports the P99 read and write latencies over the same period. Both remain stable, indicating that continuous writes do not introduce frequent read-path latency spikes or write-side backpressure. We attribute this stability to \ByteX's fast indexing path and controlled merge process. By accelerating segment construction and merge, \ByteX prevents many small streaming segments from accumulating, which would otherwise increase query fan-out and hurt read performance. Overall, these results show that \ByteX provides predictable read--write performance under prolonged streaming workloads.



%

\section{Lessons}
\label{sec:lessons}

\vspace{0.2em}
We distill five lessons from building and operating \ByteX.

\vspace{0.2em}
\para{Lesson 1: Innovate within compatibility boundary.}
Compatibility shaped \ByteX end-to-end, from API surface to index lifecycle, distributed execution. The lesson is to keep the boundary stable and innovate inside it: \ByteX preserves OpenSearch-facing behavior while redesigning the vector kernel and storage underneath. A standalone vector service might be cleaner, but it would export operational complexity to every application team. Production teams already operate 7,000+ clusters with existing APIs, dashboards, alerts, access controls, and playbooks.

\vspace{0.2em}
\para{Lesson 2: Quantize the build path, not just the query path.}
Many vector systems apply quantization at query time but still rely on full-precision vectors for index construction.  In a system with sustained ingestion (Customers~D, E), index construction is a recurring production path, and its memory footprint and construction time matters as much as the query-time memory usage and speed, as it will compete the memory and computation resource for serving.  \textsf{SymRaBitQ} enables graph construction and merge entirely in the quantized space.  In our experiments (Table~\ref{tab:indextime}), this reduced peak build memory by 80\% and indexing time by 60\% on GIST 1\,M, with less than 0.01 recall loss under the same search parameters.

\vspace{0.2em}
\para{Lesson 3: Bridge the cost gap with I/O co-design, not just tiered storage.}
The 60$\times$ memory-to-SSD price gap makes memory-only serving unsustainable for large corpora such as Customer~D's with near trillion-scale vectors.  However, simply placing graph data on SSD hurts tail latency because graph traversal has sparse, pointer-chasing access patterns that break page-level caching.  \ByteX addresses this with record-level buffer pools and asynchronous graph traversal.  In our cache-sensitivity experiment (Fig.~\ref{fig:cache}), a 10\% buffer ratio already achieves $\sim$70\% of peak memory-resident throughput, showing that a small, well-targeted cache can bridge most of the performance gap.

\vspace{0.2em}
\para{Lesson 4: Make storage placement a tunable segment property.}
Production workloads evolve: what starts as a latency-sensitive service may need to shift toward cost-effective SSD-resident serving as retention or data granularity grows.  \ByteX avoids this becoming a migration event by making storage placement a per-segment property controlled by buffer pool sizing.  

\vspace{0.2em}
\para{Lesson 5: Compose retrieval operators in one plan.}
In production, it is common that vector similarity is composed with structured predicates, lexical matching, and reranking (Customers A-F).  Executing these operators in one distributed plan avoids materializing large intermediate results and enables end-to-end scheduling that a multi-service pipeline cannot provide.

\section{Related Work}
\label{sec:related}
We briefly review core methods and system designs for vector systems. For detailed surveys, we refer the reader to \cite{song2026vector, DBLP:journals/vldb/PanWL24}.

\vspace{0.2em}
\para{Vector Search Algorithms.}
Existing ANN methods can be broadly categorized into three groups: LSH-based methods \cite{DBLP:conf/vldb/GionisIM99,DBLP:journals/pvldb/HuangFZFN15,DBLP:conf/icde/TianZZ22,DBLP:journals/pvldb/ZhengZWHLJ20}, graph-based methods \cite{DBLP:conf/www/DongCL11,DBLP:journals/pami/MalkovY20,DBLP:journals/pvldb/FuXWC19,DBLP:journals/pvldb/ZhaoTHZZ23,DBLP:conf/nips/SubramanyaDSKK19}, and quantization-based methods \cite{DBLP:journals/pami/JegouDS11,DBLP:journals/pacmmod/GaoL24}.
In modern vector databases, graph-based methods such as HNSW \cite{DBLP:journals/pami/MalkovY20} and DiskANN \cite{DBLP:conf/nips/SubramanyaDSKK19} are highly effective for ANN search. They build approximate proximity graphs that capture neighborhood relationships among data points, enabling accurate and efficient query processing. Nevertheless, their high construction cost and substantial random memory/disk accesses often motivate the use of quantization-based methods, such as PQ \cite{DBLP:journals/pami/JegouDS11} and RaBitQ \cite{DBLP:journals/pacmmod/GaoL24}, to reduce computation and memory footprint.

\vspace{0.2em}
\para{Architectures for Vector Data Management.}
Production-grade vector search is increasingly closer to a full data management system than to a standalone ANN operator~\cite{faiss}. Beyond similarity search, systems are expected to provide scalable storage and layout management, durability and recovery~\cite{mohan1992aries}, distributed execution~\cite{van2017distributed}, buffer management~\cite{do2011turbocharging}, and high-concurrency updates~\cite{o1996log,DBLP:conf/sosp/XuLLXCZLYYYCY23}. Vector similarity is also more useful when exposed as a composable operator (e.g., via DSL~\cite{elasticsearch_dsl}) to enable hybrid retrieval (text+vector), structured filtering, and reranking within a unified execution engine.
To meet these system-level requirements, existing solutions can be broadly categorized into three architectural lines. Native vector databases, such as Milvus~\cite{DBLP:conf/sigmod/WangYGJXLWGLXYY21} and Pinecone~\cite{pinecone}, are designed for high-throughput, low-latency similarity search with elastic distributed scaling; their primary focus is vector-first index lifecycle management and distributed ANN execution, including segment organization, background compaction, and parallel search across shards. Relational databases extended with vector support, including pgvector~\cite{DBLP:conf/sigmod/YangLFW20}, GaussDB-Vector~\cite{sun2025gaussdb}, Blendhouse~\cite{DBLP:conf/icde/NiuTPC25}, Cosmos~\cite{DBLP:journals/pvldb/UpretiSSSBPACSYHDPHDMXD25}, and MariaDB-Vector~\cite{mariaDB}, embed vector data types and similarity operators into the relational stack; their central axis is tight integration with the storage manager, execution engine, and cost-based optimizer to enable composable plans with joins and predicates while controlling storage and computation overhead. Enterprise search engines such as OpenSearch~\cite{opensearch2024} and Elasticsearch~\cite{elasticsearch2024} incorporate vector retrieval into text-centric search pipelines; they emphasize the co-design of inverted and vector indexes to support hybrid ranking, filtering, and large-scale distributed search serving.

\section{Conclusion}
\label{sec:conclusion}

Modern AI applications have transformed search from a standalone nearest-neighbor operator into a production data-management problem: vector collections scale rapidly, queries mix vector similarity with diverse constraints, deployment modes shift between memory and disk as costs change, and operators still need mature search APIs and tooling. This paper presented \ByteX, an OpenSearch-compatible AI search engine built from these observations. By combining a high-performance quantization-aware vector kernel, asynchronous SSD-resident graph traversal, and adaptive storage, \ByteX supports both PB-scale ingestion-heavy deployments and latency-sensitive online serving. Our experience shows that modern AI search systems must treat vectors as first-class database primitives while preserving the operational ecosystem that production teams already depend on.



\bibliographystyle{ACM-Reference-Format}
\bibliography{sample}

\appendix

\section*{Appendix}

\section{Ingestion side in \ByteX{}}
\label{sec:appendix_write}
\para{Write and indexing path.}
The write path follows a staged lifecycle that makes fresh data searchable quickly while gradually turning it into serving-efficient indexes. \textcircled{\scriptsize 1} Data first enters through the standard OpenSearch ingestion interface; on refresh, new vectors are exposed as small fresh segments, avoiding heavyweight index construction on the critical ingestion path. \textcircled{\scriptsize 2} As ingestion continues, background merge compacts these segments to bound query fanout, which is essential for sustaining low latency under continuous writes. \textcircled{\scriptsize 3} During this process, \ByteX{} upgrades the index representation with segment maturity: small segments are searched by direct scan, medium segments use IVF, and large stable segments are converted to graph indexes. \textcircled{\scriptsize 4} For graph-indexed segments, construction and merge are performed in the quantized space using \textsf{SymRaBitQ}, eliminating repeated full-precision vector materialization on the recurring merge path. \textcircled{\scriptsize 5} Once merged, the same segment can be placed in memory-resident, hybrid memory--SSD, or SSD-resident serving mode without reindexing or application migration.

\section{Serving side in \ByteX{}}
\label{sec:appendix_read}
\para{Read and serving path.}
The read path preserves OpenSearch-style query semantics while hiding the heterogeneity of vector placement and execution. \textcircled{\scriptsize 1} A native DSL query is first parsed by the search coordinator and routed to the relevant shards. \textcircled{\scriptsize 2} Within each distributed query plan, \ByteX{} jointly executes BM25 retrieval, vector similarity search, structured predicates, score fusion, and reranking, so applications do not need to issue separate requests or merge results across services. \textcircled{\scriptsize 3} Vector serving then adapts to placement: memory-resident segments use in-memory graph traversal, whereas SSD-resident graph records are accessed through a record-level buffer pool and coroutine-based asynchronous traversal that overlaps SSD reads with distance computation. \textcircled{\scriptsize 4} Output-heavy queries, including top-3\,K and radial-style queries, are handled as increased serving pressure within the same execution framework rather than as a separate retrieval mode.
\begin{listing}[t]

\caption{Query DSL of Hybrid Search in \ByteX}

\label{lst:appendix_hybrid_query}

\begin{minted}{text}
|\codehll{GET index\_name/\_search?search\_pipeline=pipeline\_name}|
|\codecomment{// specify the dataset and search pipeline}| 
{ |\codehl{"size"}|: 5,  |\codecomment{// number of returned results}|
  |\codehl{"query"}|: {
    |\codehl{"hybrid"}|: {
      |\codehl{"queries"}|: [
        {|\codehl{"match\_phrase"}|: {|\codehl{"content"}|: "test"}},
          |\codecomment{// Traditional full-text search}| 
        {|\codehl{"knn"}|: {|\codehl{"vector\_field"}|: {|\codehl{"vector"}|: [-0.21, ..., 0.89], |\codehl{"k"}|: 100, |\codehl{"filter"}|:  {conditions}}}}
        |\codecomment{// Vector similarity search with structured predicates}|
            ]
        }
    }
}
\end{minted}
\end{listing}

\section{OpenSearch-Compatible Hybrid Query}
\label{sec:appendix_dsl}

Listing~\ref{lst:appendix_hybrid_query} presents an example hybrid query in the native DSL.
The query combines structured predicates, lexical matching, vector top-$k$, score fusion, and a reranking pipeline without requiring applications to call separate retrieval services.

\section{RaBitQ, Graph, and IVF Background}
\label{sec:appendix_rabitq_graph_ivf}

This subsection gives the background that Sec.~\ref{sec:vector_kernel} intentionally keeps short.  The main text focuses on \textsf{SymRaBitQ} and its integration into \ByteX; here we spell out the baseline quantization and indexing mechanics.

\subsection{RaBitQ and ExtRaBitQ}
\label{sec:appendix_rabitq}

RaBitQ~\cite{DBLP:journals/pacmmod/GaoL24} and ExtRaBitQ~\cite{gao2025practical} estimate distances by reducing raw-vector scoring to inner-product estimation between normalized residual vectors.  Given a centroid $\bm{c}$, a raw data vector $\bm{x}_r$, and a raw query vector $\bm{q}_r$, define
\begin{equation}\small
\bm{o}=\frac{\bm{x}_r-\bm{c}}{\|\bm{x}_r-\bm{c}\|},
\qquad
\bm{q}=\frac{\bm{q}_r-\bm{c}}{\|\bm{q}_r-\bm{c}\|}.
\end{equation}
Then the squared Euclidean distance can be written as
\begin{equation}\small
\|\bm{x}_r-\bm{q}_r\|^2
=\|\bm{x}_r-\bm{c}\|^2+\|\bm{q}_r-\bm{c}\|^2
-2\|\bm{x}_r-\bm{c}\|\|\bm{q}_r-\bm{c}\|\langle\bm{o},\bm{q}\rangle,
\end{equation}
and inner-product scoring can be recovered by
\begin{equation}\small
\langle \bm{x}_r,\bm{q}_r\rangle
=\langle \bm{x}_r,\bm{c}\rangle-\|\bm{c}\|^2+\langle \bm{q}_r,\bm{c}\rangle
+\|\bm{x}_r-\bm{c}\|\|\bm{q}_r-\bm{c}\|\langle\bm{o},\bm{q}\rangle.
\end{equation}
Terms involving the stored data vector and centroid are precomputed, so the online problem is to estimate $\langle\bm{o},\bm{q}\rangle$.

RaBitQ builds a conceptual rotated codebook.  Let
\begin{equation}\small
\begin{aligned}
\mathcal{G} &:= \left\{-\frac{2^B-1}{2}+u ~\middle|~ u=0,1,\ldots,2^B-1\right\}^D,\\
\mathcal{G}_r &:= \left\{P\frac{\bm{y}}{\|\bm{y}\|} ~\middle|~ \bm{y}\in\mathcal{G}\right\},
\end{aligned}
\end{equation}
where $P$ is a random orthogonal transformation matrix and $B$ is the number of bits per dimension.  A vector is quantized by finding the nearest element of $\mathcal{G}_r$.  Because explicitly materializing $\mathcal{G}_r$ would require $2^{BD}$ candidates, RaBitQ applies the inverse rotation to the vector and searches in the unrotated integer lattice:
\begin{equation}\small
\argmin_{\bar{\bm{o}}\in\mathcal{G}_r}\|\bar{\bm{o}}-\bm{o}\|^2
=
\argmax_{\bar{\bm{y}}\in\mathcal{G}}
\left\langle\frac{\bar{\bm{y}}}{\|\bar{\bm{y}}\|},P^{-1}\bm{o}\right\rangle.
\end{equation}
This yields a compact code $\bm{u}\in\{0,\ldots,2^B-1\}^D$ and a quantized unit vector $\bar{\bm{o}}$.

For two encoded vectors with integer codes $\bm{u}_o$ and $\bm{u}_q$, define the centered code vectors
$\tilde{\bm{u}}_o=\bm{u}_o-(2^B-1)\bm{1}_D/2$ and
$\tilde{\bm{u}}_q=\bm{u}_q-(2^B-1)\bm{1}_D/2$.
Since $\bar{\bm{o}}=P\tilde{\bm{u}}_o/\|\tilde{\bm{u}}_o\|$, $\bar{\bm{q}}=P\tilde{\bm{u}}_q/\|\tilde{\bm{u}}_q\|$, and $P^TP=I$, the code-only numerator in \textsf{SymRaBitQ} is
\begin{equation}\small
\label{eq:appendix_sym_code_dot}
\langle \bar{\bm{o}},\bar{\bm{q}}\rangle
=
\frac{\langle \tilde{\bm{u}}_o,\tilde{\bm{u}}_q\rangle}
{\|\tilde{\bm{u}}_o\|\,\|\tilde{\bm{u}}_q\|}.
\end{equation}
For $B=1$, the centered codes have only two signs, so Eq.~(\ref{eq:appendix_sym_code_dot}) is equivalent to normalized sign agreement and can be computed from the Hamming distance between binary codes.

The original RaBitQ/ExtRaBitQ query estimator is asymmetric:
\begin{equation}\small
\label{eq:appendix_rabitq_est}
\widehat{\langle\bm{o},\bm{q}\rangle}_{\mathrm{RaBitQ}}
=\frac{\langle\bar{\bm{o}},\bm{q}\rangle}{\langle\bar{\bm{o}},\bm{o}\rangle}.
\end{equation}
The denominator is a per-vector constant, but the numerator still contains the raw query/residual vector.  This is suitable for data-to-query scoring.  It is not sufficient for graph construction and segment merge, where the hot loop compares one stored vector against other stored candidate vectors.  \textsf{SymRaBitQ} keeps the same code construction but replaces Eq.~(\ref{eq:appendix_rabitq_est}) with the symmetric estimator in Eq.~(\ref{eq:symrabitq_est}), so stored-vector comparisons use only codes and precomputed constants.

\subsection{Graph Index Background}
\label{sec:appendix_graph_background}

Graph-based ANN indexes represent each vector as a vertex and store a bounded set of outgoing neighbors.  A query starts from one or more entry points and performs a greedy search: it keeps a candidate list, repeatedly expands the currently closest unvisited vertex, inserts its outgoing neighbors, and truncates the list to a configured search width.  The quality of the graph determines whether this local greedy process quickly reaches the true nearest-neighbor region.

\ByteX follows the DiskANN/Vamana-style construction process~\cite{jayaram2019diskann}.  For an inserted vector, the builder first runs a greedy search on the current graph to collect a candidate set.  It then applies \textsf{RobustPrune}, which keeps at most $R$ outgoing neighbors while removing candidates that are close to an already selected neighbor.  In a typical pruning step, after selecting the current closest candidate $\bm{p}$ to the inserted vector $\bm{o}$, another candidate $\bm{v}$ can be removed when $\alpha\,d(\bm{p},\bm{v})\le d(\bm{o},\bm{v})$.  This rule encourages sparse, diverse neighborhoods and approximates the sparse-neighborhood property used by proximity graphs.  The builder also inserts backward edges; if a neighbor's out-degree exceeds $R$, it is pruned again.  Recent theoretical analysis further shows why Vamana-style graphs are attractive among proximity graph families~\cite{DBLP:journals/corr/abs-2510-05975}.

In \ByteX, graph building is additionally segment-aware.  The system computes cluster centers on a sample, assigns each vector to its nearest centers, builds a subgraph inside each cluster, and merges the subgraphs by unioning edges.  This keeps construction parallel and local while allowing large merged segments to obtain a high-quality graph.  The main-text contribution is that every distance $d(\cdot,\cdot)$ in greedy search, pruning, and backward-edge repair is replaced by $\mathsf{SymDist}_B$.

\begin{algorithm}[t]
\small
\caption{SymRaBitQ Graph Construction}
\label{alg:index}
\LinesNumbered
\KwIn{Dataset $\mathcal{D}$, parameters $m,l,R,\alpha,L$}
\KwOut{Graph $G$}

Compute cluster centers $\{\bm{c}_i\}_{i=1}^m$ on a sample of $\mathcal{D}$\;
Assign each $\bm{x}_r\in\mathcal{D}$ to its $l$ nearest centers\;
For each assignment to center $\bm{c}_i$, encode $\bm{x}_r$ as $z_i(\bm{o})=(\bar{\bm{o}},\bar{\bm{o}}_0,\gamma_{\bm{o}},\rho_{\bm{o}})$ via \textsf{SymRaBitQ}\;

\For{$1\le i\le m$}{
  Initialize a random $R$-regular directed graph $G_i$\;
  Choose an entry point $s_i$ for cluster $i$\;
  Randomly permute encoded records in cluster $i$\;

  \ForEach{$z_i(\bm{o})$ \textbf{in parallel}}{
    $\mathcal{V}\gets\textsf{FindCandidates}(G_i,s_i,z_i(\bm{o}),L)$\;
    $N_{\mathrm{out}}(z_i(\bm{o}))\gets\textsf{RobustPrune}(G_i,z_i(\bm{o}),\mathcal{V},\alpha,R)$\;
    \ForEach{$z_j\in N_{\mathrm{out}}(z_i(\bm{o}))$}{
      \If{$|N_{\mathrm{out}}(z_j)\cup\{z_i(\bm{o})\}|>R$}{
        $N_{\mathrm{out}}(z_j)\gets\textsf{RobustPrune}(G_i,z_j,N_{\mathrm{out}}(z_j)\cup\{z_i(\bm{o})\},\alpha,R)$\;
      }
      \Else{
        $N_{\mathrm{out}}(z_j)\gets N_{\mathrm{out}}(z_j)\cup\{z_i(\bm{o})\}$\;
      }
    }
  }
}
Merge all subgraphs $\{G_i\}$ into $G$\;
\Return{$G$}
\end{algorithm}

\begin{algorithm}[t]
\small
\caption{\textsf{FindCandidates}$(G_i,s_i,z_{\bm{o}},L)$}
\label{alg:greedy}
\LinesNumbered
\KwIn{Graph $G_i$, entry point $s_i$, encoded target $z_{\bm{o}}$, search list size $L$}
\KwOut{Visited candidate set $\mathcal{V}$}

$\mathcal{L}\leftarrow\{s_i\}$, $\mathcal{V}\leftarrow\emptyset$\;
\While{$\mathcal{L}\setminus\mathcal{V}\neq\emptyset$}{
  $z^*\gets\argmin_{z_j\in\mathcal{L}\setminus\mathcal{V}}\operatorname{SymDist}_{B}(z_j,z_{\bm{o}})$\;
  $\mathcal{L}\leftarrow\mathcal{L}\cup N_{\mathrm{out}}(z^*)$\;
  $\mathcal{V}\leftarrow\mathcal{V}\cup\{z^*\}$\;
  \If{$|\mathcal{L}|>L$}{
    Keep the closest $L$ records in $\mathcal{L}$ under $\operatorname{SymDist}_{B}(\cdot,z_{\bm{o}})$\;
  }
}
\Return{$\mathcal{V}$}
\end{algorithm}

\begin{algorithm}[t]
\small
\caption{\textsf{RobustPrune}$(G_i,z_{\bm{o}},\mathcal{V},\alpha,R)$}
\label{alg:robustprune}
\LinesNumbered
\KwIn{Graph $G_i$, encoded target $z_{\bm{o}}$, candidate set $\mathcal{V}$, pruning threshold $\alpha$, degree bound $R$}
\KwOut{Updated out-neighbor set $N_{\mathrm{out}}(z_{\bm{o}})$}

$\mathcal{V}\leftarrow(\mathcal{V}\cup N_{\mathrm{out}}(z_{\bm{o}}))\setminus\{z_{\bm{o}}\}$\;
$N_{\mathrm{out}}(z_{\bm{o}})\leftarrow\emptyset$\;
\While{$\mathcal{V}\neq\emptyset$}{
  $z^*\gets\argmin_{z_j\in\mathcal{V}}\operatorname{SymDist}_{B}(z_{\bm{o}},z_j)$\;
  $N_{\mathrm{out}}(z_{\bm{o}})\leftarrow N_{\mathrm{out}}(z_{\bm{o}})\cup\{z^*\}$\;
  \If{$|N_{\mathrm{out}}(z_{\bm{o}})|=R$}{
    \textsf{break}\;
  }
  \ForEach{$z_j\in\mathcal{V}$}{
    \If{$\alpha\cdot\operatorname{SymDist}_{B}(z^*,z_j)\le\operatorname{SymDist}_{B}(z_{\bm{o}},z_j)$}{
      Remove $z_j$ from $\mathcal{V}$\;
    }
  }
}
\Return{$N_{\mathrm{out}}(z_{\bm{o}})$}
\end{algorithm}

\begin{algorithm}[t]
\small
\caption{SymRaBitQ Graph Search}
\label{alg:graph-search}
\LinesNumbered
\KwIn{Graph $G$, entry point $s$, query $\bm{q}_r$, parameters $k,L,\beta$}
\KwOut{Top-$k$ nearest neighbors}

Encode $\bm{q}_r$ as $z_{\bm{q}}=(\bar{\bm{q}},\bar{\bm{q}}_0,\gamma_{\bm{q}},\rho_{\bm{q}})$ via \textsf{SymRaBitQ}\;
$\mathcal{L}\leftarrow\{s\}$, $\mathcal{V}\leftarrow\emptyset$\;
\While{$\mathcal{L}\setminus\mathcal{V}\neq\emptyset$}{
  $z^*\gets\argmin_{z_j\in\mathcal{L}\setminus\mathcal{V}}\operatorname{SymDist}_{1}(z_j,z_{\bm{q}})$\;
  $\mathcal{L}\leftarrow\mathcal{L}\cup N_{\mathrm{out}}(z^*)$\;
  $\mathcal{V}\leftarrow\mathcal{V}\cup\{z^*\}$\;
  \If{$|\mathcal{L}|>L$}{
    Keep the closest $L$ records in $\mathcal{L}$ under $\operatorname{SymDist}_{1}(\cdot,z_{\bm{q}})$\;
  }
}
$\mathcal{S}\leftarrow$ closest $\beta k$ records in $\mathcal{L}$ under $\operatorname{SymDist}_{1}(\cdot,z_{\bm{q}})$\;
\Return{closest $k$ records in $\mathcal{S}$ under $\operatorname{SymDist}_{B}(\cdot,z_{\bm{q}})$}
\end{algorithm}

\vspace{2em}
\subsection{IVF and FastScan Background}
\label{sec:appendix_ivf_background}
An inverted-file (IVF) index partitions vectors by KMeans centroids.  Each centroid owns an inverted list containing the vectors assigned to it, usually represented as residuals relative to the centroid.  At query time, the system first compares the query with all centroids, chooses the closest $n_{\mathit{probe}}$ lists, and scans only those lists.  IVF is cheap to build and works well for medium-size segments because it reduces scan volume without paying the heavier graph-construction cost.

IVF is commonly paired with quantization.  The query is encoded or transformed with respect to each probed centroid, and list entries are scored with compressed codes.  FastScan-style methods~\cite{DBLP:journals/pvldb/AndreKS15,DBLP:conf/mir/AndreKS17,DBLP:journals/pami/AndreKS21} speed up this scan by preparing query-specific lookup tables and evaluating low-bit codes with SIMD-friendly operations.  \ByteX uses the same idea in a SymRaBitQ form: the first-bit code gives a very cheap coarse distance and bound, and the remaining bits are used only for candidates that survive coarse filtering.

\section{The Bounds of SymRaBitQ}
\label{sec:theorem_4.1_proof}
We first show that SymRaBitQ admits a bounded error in the same spirit as RaBitQ~\cite{DBLP:journals/pacmmod/GaoL24}.

Let $\bm{e}_1=\dfrac{\bm{o}-\langle\bm{o},\bm{q}\rangle\bm{q}}{\|\bm{o}-\langle\bm{o},\bm{q}\rangle\bm{q}\|}$, so that $\bm{e}_1\perp\bm{q}$.  Then we can decompose

\begin{equation}
\small
\begin{aligned}
\bar{\bm{q}} &= \big(\bar{\bm{q}}-\langle\bar{\bm{q}},\bm{q}\rangle\bm{q}-\langle\bar{\bm{q}},\bm{e}_1\rangle\bm{e}_1\big)
+\langle\bar{\bm{q}},\bm{q}\rangle\bm{q}+\langle\bar{\bm{q}},\bm{e}_1\rangle\bm{e}_1,\\
\bm{o} &= \langle\bm{o},\bm{q}\rangle\bm{q}+\langle\bm{o},\bm{e}_1\rangle\bm{e}_1.
\end{aligned}
\end{equation}
Consequently,
\begin{equation}\small
\begin{aligned}
\langle\bar{\bm{q}},\bm{o}\rangle
&=\langle\bar{\bm{q}},\bm{q}\rangle\langle\bm{o},\bm{q}\rangle+\langle\bar{\bm{q}},\bm{e}_1\rangle\langle\bm{o},\bm{e}_1\rangle\\
&=\langle\bar{\bm{q}},\bm{q}\rangle\langle\bm{o},\bm{q}\rangle+\langle\bar{\bm{q}},\bm{e}_1\rangle\sqrt{1-\langle\bm{o},\bm{q}\rangle^2}.
\end{aligned}
\end{equation}
Thus,
\begin{equation}\small
\frac{\langle\bar{\bm{q}},\bm{o}\rangle}{\langle\bar{\bm{q}},\bm{q}\rangle}
=\langle\bm{o},\bm{q}\rangle+\sqrt{1-\langle\bm{o},\bm{q}\rangle^2}\cdot\frac{\langle\bar{\bm{q}},\bm{e}_1\rangle}{\langle\bar{\bm{q}},\bm{q}\rangle}.
\end{equation}

Similarly, let $\bm{e}_2=\dfrac{\bar{\bm{q}}-\langle\bar{\bm{q}},\bm{o}\rangle\bm{o}}{\|\bar{\bm{q}}-\langle\bar{\bm{q}},\bm{o}\rangle\bm{o}\|}$, so that $\bm{e}_2\perp\bm{o}$.  By the same argument,
\begin{equation}\small
\frac{\langle\bar{\bm{o}},\bar{\bm{q}}\rangle}{\langle\bar{\bm{o}},\bm{o}\rangle}
=\langle\bm{o},\bar{\bm{q}}\rangle+
\sqrt{1-\langle\bm{o},\bar{\bm{q}}\rangle^2}\cdot\frac{\langle\bar{\bm{o}},\bm{e}_2\rangle}{\langle\bar{\bm{o}},\bm{o}\rangle}.
\end{equation}
Combining the two identities, we obtain
\begin{equation}\small
\begin{aligned}
\langle\bm{o},\bm{q}\rangle
&=\frac{\langle\bar{\bm{o}},\bar{\bm{q}}\rangle}{\langle\bar{\bm{o}},\bm{o}\rangle\langle\bar{\bm{q}},\bm{q}\rangle}
-\sqrt{1-\langle\bm{o},\bar{\bm{q}}\rangle^2}\cdot\frac{\langle\bar{\bm{o}},\bm{e}_2\rangle}{\langle\bar{\bm{o}},\bm{o}\rangle\langle\bar{\bm{q}},\bm{q}\rangle}\\
&\quad-\sqrt{1-\langle\bm{o},\bm{q}\rangle^2}\cdot\frac{\langle\bar{\bm{q}},\bm{e}_1\rangle}{\langle\bar{\bm{q}},\bm{q}\rangle}.
\end{aligned}
\end{equation}

We then prove the error bound.  Let $X_1$ and $X_2$ denote the two normalized residual inner products in the last display, and let $\Delta_{\bm{o},\bm{q}}$ be the coefficient defined in Theorem~\ref{th:estimator}.  The concentration argument gives
\begin{equation}\small
\begin{aligned}
&\mathbb{P}\left\{\left|\frac{\langle\bar{\bm{o}},\bar{\bm{q}}\rangle}{\langle\bar{\bm{o}},\bm{o}\rangle\langle\bar{\bm{q}},\bm{q}\rangle}-\langle\bm{o},\bm{q}\rangle\right|>
\Delta_{\bm{o},\bm{q}}\frac{\epsilon_0}{\sqrt{D-1}}\right\}\\
&\leq\mathbb{P}\left\{\sqrt{\frac{1-\langle\bar{\bm{o}},\bm{o}\rangle^2}{\langle\bar{\bm{o}},\bm{o}\rangle^2\langle\bar{\bm{q}},\bm{q}\rangle^2}}|X_2|
+\sqrt{\frac{1-\langle\bar{\bm{q}},\bm{q}\rangle^2}{\langle\bar{\bm{q}},\bm{q}\rangle^2}}|X_1|>
\Delta_{\bm{o},\bm{q}}\frac{\epsilon_0}{\sqrt{D-1}}\right\}\\
&\leq\mathbb{P}\left\{|X_2|>\frac{\epsilon_0}{\sqrt{D-1}}\right\}
+\mathbb{P}\left\{|X_1|>\frac{\epsilon_0}{\sqrt{D-1}}\right\}
\leq 4e^{-c_0\epsilon_0^2}.
\end{aligned}
\end{equation}

\section{The Unbiasedness of SymRaBitQ}
\label{sec:theorem_4.2_proof}
To prove the unbiasedness of SymRaBitQ, we first recall a standard concentration fact for Gaussian matrices.

\begin{lemma}\label{le:gauss_orthog}
For a $d\times d$ random matrix $A$ with i.i.d. entries from $\mathcal{N}(0,1)$, the normalized Gram matrix $\frac{1}{d}A^T A$ concentrates around the identity; in particular,
\begin{equation}\small
\left\{\frac{1}{d}A^T A-I_d\right\}_{ij}=O\left(\frac{1}{\sqrt{d}}\right).
\end{equation}
Equivalently, $\frac{1}{\sqrt{d}}A$ is nearly orthogonal when $d$ is large.
\end{lemma}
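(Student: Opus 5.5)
The statement to prove is Lemma~\ref{le:gauss_orthog}: for a $d\times d$ matrix $A$ with i.i.d.\ $\mathcal{N}(0,1)$ entries, the entries of $\frac{1}{d}A^TA-I_d$ are $O(1/\sqrt{d})$ with high probability. The plan is to prove this entrywise, since that is exactly the claim. I will split into diagonal and off-diagonal entries.

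Write the columns of $A$ as $\bm{a}_1,\dots,\bm{a}_d\in\mathbb{R}^d$. They are independent standard Gaussian vectors, and
\begin{equation}\small
\left\{\tfrac{1}{d}A^TA\right\}_{ij}=\tfrac{1}{d}\langle\bm{a}_i,\bm{a}_j\rangle .
\end{equation}

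\textbf{Diagonal entries.} Here $\frac{1}{d}\|\bm{a}_i\|^2-1=\frac{1}{d}\sum_{k=1}^d(A_{ki}^2-1)$. The summands are i.i.d., centered, and sub-exponential, since $A_{ki}^2\sim\chi^2_1$. Bernstein's inequality, or the Laurent--Massart $\chi^2$ tail bound, gives
\begin{equation}\small
\mathbb{P}\left\{\left|\tfrac{1}{d}\|\bm{a}_i\|^2-1\right|>t\right\}\le 2\exp\left(-c\,d\min(t^2,t)\right).
\end{equation}
Taking $t=\epsilon/\sqrt{d}$ with $\epsilon\le\sqrt{d}$, the failure probability is at most $2e^{-c\epsilon^2}$.

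\textbf{Off-diagonal entries ($i\neq j$).} The summands $A_{ki}A_{kj}$ are products of independent standard normals. They are centered, i.i.d.\ across $k$, and sub-exponential with $\psi_1$-norm bounded by an absolute constant, so Bernstein again gives $|\frac{1}{d}\langle\bm{a}_i,\bm{a}_j\rangle|\le\epsilon/\sqrt{d}$ with probability at least $1-2e^{-c\epsilon^2}$.

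An alternative route for this case is to condition on $\bm{a}_j$. Then $\langle\bm{a}_i,\bm{a}_j\rangle\sim\mathcal{N}(0,\|\bm{a}_j\|^2)$, and the diagonal step gives $\|\bm{a}_j\|^2\le 2d$ with high probability. This yields a Gaussian tail of order $\sqrt{d}\,\epsilon$ for the inner product, hence $\epsilon/\sqrt{d}$ after dividing by $d$.

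\textbf{Uniformity over all entries.} To state the $O(1/\sqrt{d})$ bound for all $d^2$ entries simultaneously, I will take a union bound with $\epsilon=C\sqrt{\log d}$. This gives $\max_{i,j}|\{\frac1dA^TA-I\}_{ij}|=O(\sqrt{\log d/d})$ with probability $1-d^{-\Omega(1)}$. The lemma's per-entry $O(1/\sqrt d)$ reading, with constant $\epsilon$, follows directly from the single-entry bounds above.

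Finally, I will note the consequence used downstream: $\frac{1}{\sqrt d}A$ has nearly orthonormal columns. If operator-norm closeness is also wanted, the standard bound $\|\frac1dA^TA-I\|_{op}\lesssim 1/\sqrt{d}$ for a $d\times d$ matrix does \emph{not} hold; for square matrices the smallest singular value degenerates. So I will deliberately restrict the claim to entrywise concentration, as the lemma states.

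\textbf{Main obstacle.} The only real care point is the off-diagonal term, because a product of Gaussians is not sub-Gaussian and so requires the sub-exponential (Bernstein) regime. I would handle it by the conditioning argument, which avoids computing $\psi_1$ constants.
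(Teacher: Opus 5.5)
Your proof is correct. It uses the same diagonal/off-diagonal split as the paper, but a different tool for each entry.

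The paper's argument is purely second-moment. It computes $\mathbb{E}[(A^TA)_{ii}]=d$, $\mathrm{Var}((A^TA)_{ii})=2d$, $\mathbb{E}[(A^TA)_{ij}]=0$, $\mathrm{Var}((A^TA)_{ij})=d$, and concludes that after dividing by $d$ each entry fluctuates on the scale $1/\sqrt{d}$. Made rigorous, this is Chebyshev: $\mathbb{P}\{|\{\frac1dA^TA-I_d\}_{ij}|>\epsilon/\sqrt d\}\le 2/\epsilon^2$. You instead use sub-exponential concentration. For the $\chi^2_d$ diagonal you invoke Bernstein or Laurent--Massart; off the diagonal you use Bernstein or Gaussian conditioning on $\bm{a}_j$. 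This gives $2e^{-c\epsilon^2}$ for $\epsilon\le\sqrt d$.

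The paper's route is shorter and needs only a variance calculation, which suffices for the per-entry, in-probability $O(1/\sqrt d)$ reading of the lemma. Yours costs a standard concentration inequality but gains two things:
\begin{itemize}
\item \emph{Tail form.} You get a tail of the same $e^{-c_0\epsilon_0^2}$ shape the paper later uses in Theorem~\ref{th:estimator}.
\item \emph{Uniformity.} Your bound survives a union bound over all $d^2$ entries. Chebyshev plus a union bound is vacuous, since it controls the maximum only at scale $\sqrt d$.
\end{itemize}
Your uniform rate $O(\sqrt{\log d/d})$ is the right one. Already the $d/2$ independent entries $\frac1d\langle\bm{a}_{2k-1},\bm{a}_{2k}\rangle$ have a maximum of that order.

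Your operator-norm caveat is a genuine correction of the paper, not a side remark. For square $A$ the spectrum of $\frac1dA^TA$ fills $[0,4]$. So the paper's closing claim that $\frac{1}{\sqrt d}A$ is close to an orthogonal matrix, and the lemma's ``Equivalently'' clause, hold only in the entrywise sense you actually prove.
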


\begin{proof}
The entries of $A^T A$ are
\begin{equation}\small
(A^T A)_{ij}=\sum_{k=1}^d a_{ki}a_{kj}.
\end{equation}
For $i=j$, $\mathbb{E}[(A^T A)_{ii}]=d$ and $\mathrm{Var}((A^T A)_{ii})=2d$; for $i\neq j$, $\mathbb{E}[(A^T A)_{ij}]=0$ and $\mathrm{Var}((A^T A)_{ij})=d$. Therefore, after normalization by $d$, both diagonal and off-diagonal entries have fluctuations on the order of $1/\sqrt{d}$, i.e.,
\begin{equation}\small
\left\{\frac{1}{d}A^T A-I_d\right\}_{ij}=O\left(\frac{1}{\sqrt{d}}\right),
\end{equation}
which implies that $\frac{1}{\sqrt{d}}A$ is close to an orthogonal matrix for large $d$.
\end{proof}

With Lemma~\ref{le:gauss_orthog}, we can treat the orthogonal matrix $P$ in RaBitQ as a properly normalized Gaussian random matrix. Then
\begin{equation}\small
\begin{aligned}
\langle\bar{\bm{o}},\bar{\bm{q}}\rangle
&=\sum_{i=1}^d \xi(\bar{\bm{q}}[i],\bar{\bm{o}}[i])\\
&=\frac{1}{d}\sum_{i=1}^d \xi\big(h_i^*(\bm{q}'),h_i^*(\bm{x})\big),
\end{aligned}
\end{equation}
where $h_i^*(\bm{x})$ is the inner product with a standard Gaussian random vector $\bm{a}$ whose entries satisfy $a_i\stackrel{i.i.d.}{\sim}\mathcal N(0,1)$. The function $\xi(X_1,X_2)$ is defined as
\begin{equation}\small
\xi(X_1,X_2)=
\begin{cases}
1, & \mathrm{sign}(X_1)=\mathrm{sign}(X_2),\\
-1, & \mathrm{sign}(X_1)\neq \mathrm{sign}(X_2).
\end{cases}
\end{equation}

By Lemma~1 in \cite{DBLP:journals/pvldb/ZhaoZYLXZJ23}, $\mathbb E[\xi(X_1,X_2)]=1-\frac{2\theta}{\pi}$, where $\theta$ is the angle between $\bm{q}$ and $\bm{x}$. Hence $\mathbb E[\langle\bar{\bm{o}},\bar{\bm{q}}\rangle]=1-\frac{2\theta}{\pi}$.  Moreover, $\langle\bar{\bm{o}},\bm{o}\rangle$ and $\langle\bar{\bm{q}},\bm{q}\rangle$ are both close to $\sqrt{\frac{2}{\pi}}$, so
\begin{equation}\small
\mathbb E\left[\frac{\langle\bar{\bm{o}},\bar{\bm{q}}\rangle}{\langle\bar{\bm{o}},\bm{o}\rangle\langle\bar{\bm{q}},\bm{q}\rangle}\right]
\approx\frac{\pi}{2}-\theta.
\end{equation}

Let $t=\frac{\pi}{2}-\theta$. In high dimensions, two unrelated vectors tend to be nearly orthogonal, so $t$ is typically close to $0$. When $t=0$, we have $\langle\bm{o},\bm{q}\rangle=0$ and SymRaBitQ is unbiased. When $t\neq0$ but small,
\begin{equation}\small
\mathbb E\left[\frac{\langle\bar{\bm{o}},\bar{\bm{q}}\rangle}{\langle\bar{\bm{o}},\bm{o}\rangle\langle\bar{\bm{q}},\bm{q}\rangle}\right]-\langle\bm{o},\bm{q}\rangle=t-\sin(t)=O(t^3),
\end{equation}
which is negligible as $t\to0$. Therefore, SymRaBitQ is approximately unbiased in typical high-dimensional settings.

\begin{table*}[t]
\small
\centering
\caption{Controlled benchmark resource and index configurations. C denotes vCPUs.}
\label{tab:cluster_config}
\renewcommand{\arraystretch}{1.25}
\begin{tabular}{l | p{0.13\linewidth} | p{0.25\linewidth} | p{0.33\linewidth}}
\toprule
System & Datasets & Resource configuration & Index and tuning configuration \\
\midrule
\multirow{2}{*}{ElasticSearch}
& Wiki, GIST
& 3 $\times$ 8C64GiB data nodes \newline 2 $\times$ 8C16GiB coordinator nodes
& $m=48$, \texttt{ef\_construction}=400 \\
\cmidrule{2-4}
& Cohere, in-house
& 3 $\times$ 24C192GiB data nodes \newline 2 $\times$ 16C32GiB coordinator nodes
& $m=48$, \texttt{ef\_construction}=400 for Cohere\newline $m=20$, \texttt{ef\_construction}=400 for in-house \\
\midrule
\multirow{2}{*}{\ByteX}
& Wiki, GIST
& 3 $\times$ 8C64GiB data nodes \newline 2 $\times$ 8C16GiB coordinator nodes
& $m=64$, \texttt{ef\_construction}=300 \\
\cmidrule{2-4}
& Cohere, in-house
& 3 $\times$ 24C192GiB data nodes \newline 2 $\times$ 16C32GiB coordinator nodes
& $m=48$, \texttt{ef\_construction}=200 \\
\midrule
\multirow{2}{*}{\makecell{PostgreSQL\\(VectorChord)}}
& Wiki, GIST
& Single large host capped by cgroups at 24C192GiB
& $m=20$, \texttt{ef\_construction}=400 \\
\cmidrule{2-4}
& Cohere, in-house
& Single large host capped by cgroups at 72C576GiB
& $m=48$, \texttt{ef\_construction}=400 \\
\midrule
\multirow{2}{*}{Milvus}
& Wiki, GIST
& 3 $\times$ 8C64GiB query nodes \newline 3 $\times$ 8C64GiB index nodes \newline 2 $\times$ 8C16GiB proxy nodes \newline 2 $\times$ 8C16GiB data nodes \newline 2 $\times$ 4C8GiB mixcoord nodes
& $M=24$, \texttt{ef\_construction}=300 \\
\cmidrule{2-4}
& Cohere, in-house
& 3 $\times$ 24C192GiB query nodes\newline 3 $\times$ 24C192GiB index nodes\newline 2 $\times$ 16C32GiB proxy nodes\newline 2 $\times$ 16C32GiB data nodes\newline 2 $\times$ 8C16GiB mixcoord nodes
& $M=48$, \texttt{ef\_construction}=400 \\
\bottomrule
\end{tabular}
\end{table*}

\begin{figure}[t]
    \centering
    \includegraphics[width=0.95\linewidth]{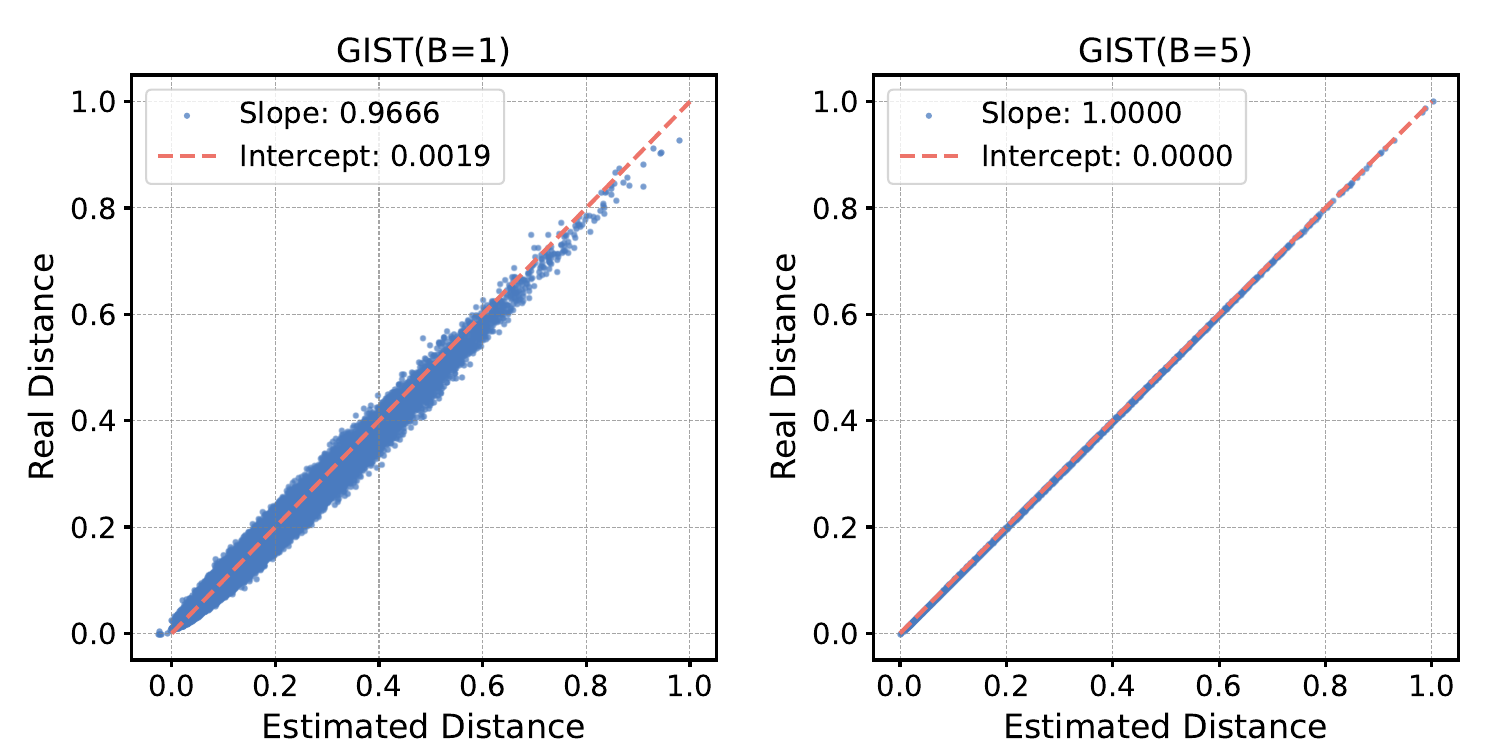}
    \caption{Linear comparison between estimated distance and real distance on GIST under two bit settings ($B=1$ and $B=5$).}
    \label{fig:linear_comparison}
\end{figure}

Fig.~\ref{fig:linear_comparison} further validates the unbiasedness analysis in practice. Across both settings, most samples concentrate around the $y=x$ line, and the fitted slope/intercept are close to the ideal values (slope $1$, intercept $0$). This shows that the estimated distance tracks the real distance well, with only limited deviation in the low-distance region.

\section{Controlled Benchmark Configuration}
\label{sec:machine_settings}
Table~\ref{tab:cluster_config} summarizes the resource provisioning and index/tuning configurations used in the controlled benchmark deployments. For Elasticsearch and \ByteX, the coordinator nodes run query coordination only; the data nodes store and serve the vector index. PostgreSQL runs on a large host with cgroup limits to match the aggregate compute and memory budget of the corresponding data tier. Milvus uses its standard role-separated deployment.
Across distributed systems, Wiki and GIST use one shard with three total copies, while Cohere and the in-house dataset use three shards without replicas. PostgreSQL/VectorChord runs as a single-node system. To put each distributed system in its optimized query-serving state, we consolidate each shard into a single segment before evaluation for Elasticsearch, Milvus, and \ByteX. For PostgreSQL, we follow VectorChord's official performance-tuning guidance\footnote{https://docs.vectorchord.ai/vectorchord/usage/performance-tuning.html} and apply the parameters listed below: \texttt{max\_worker\_processes} = 80, \texttt{max\_parallel\_workers} = 71, \texttt{max\_parallel\_maintenance\_workers} = 71, \texttt{effective\_io} \texttt{\_concurrency} = 20, \texttt{maintenance\_io}\texttt{\_concurrency} = 20, \texttt{jit}= off, and
\texttt{shared\_buffers} = 460GB

\begin{figure*}[t]
    \centering

    \begin{minipage}[t]{\textwidth}
        \centering
        \includegraphics[width=\linewidth]{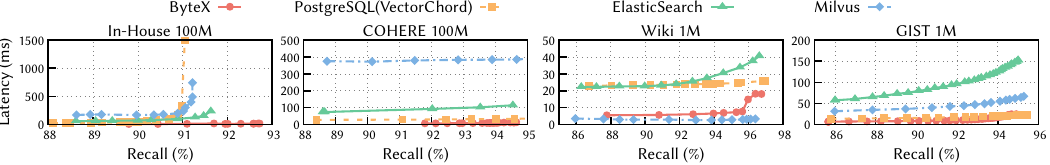}
        \caption{Concurrent Average Latency--Recall Trade-off.}
        \label{fig:appendix-latency-recall-conc-avg}
    \end{minipage}

    \begin{minipage}[t]{\textwidth}
        \centering
        \includegraphics[width=\linewidth]{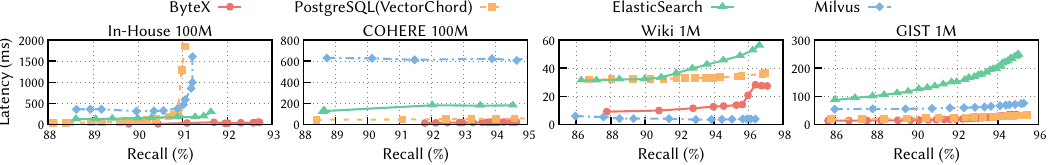}
        \caption{Concurrent P95 Latency--Recall Trade-off.}
        \label{fig:appendix-latency-recall-conc-p95}
    \end{minipage}

    \begin{minipage}[t]{\textwidth}
        \centering
        \includegraphics[width=\linewidth]{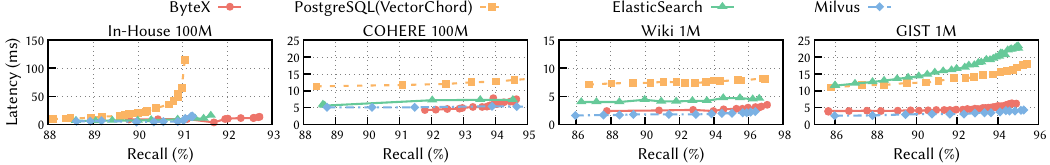}
        \caption{Serial P95 Latency--Recall Trade-off.}
        \label{fig:appendix-latency-recall-seri-p95}
    \end{minipage}

    \begin{minipage}[t]{\textwidth}
        \centering
        \includegraphics[width=\linewidth]{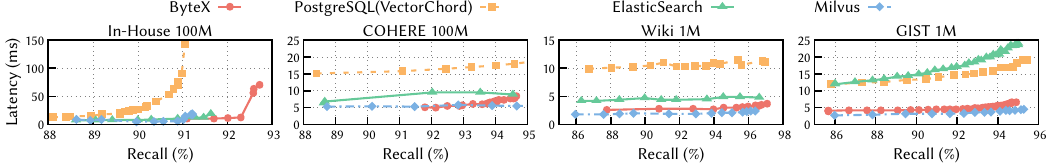}
        \caption{Serial P99 Latency--Recall Trade-off.}
        \label{fig:appendix-latency-recall-seri-p99}
    \end{minipage}

\end{figure*}

\section{Operating-Cost Configuration}
\label{sec:cost_deployment_config}

The operating-cost comparison in Table~\ref{tab:price} charges only the serving tier that physically stores or serves the in-house dataset. Auxiliary components, such as coordinators, proxies, index nodes, and mixcoord nodes, are excluded because their overhead is shared or amortized and does not scale directly with dataset size. Milvus, Elasticsearch, \ByteX (Memory-only), and PostgreSQL  use the in-house serving footprint shown in Table~\ref{tab:cluster_config}; PostgreSQL is charged by its cgroup-limited serving footprint.\footnote{A larger host may be used for deployment, but the resource and cost model charges only the cgroup-limited serving footprint.}
All charged nodes use the same commercial cloud instance family unless otherwise stated. \ByteX (Disk-only) uses 3 $\times$ 2C16GiB data nodes because only the minimal index structures are memory-resident, while the remaining data are stored on disk.

%



\section{Additional Latency Results}
\label{sec:appendix_latency}

Fig.~\ref{fig:appendix-latency-recall-conc-avg}--\ref{fig:appendix-latency-recall-seri-p99} report the remaining latency--recall curves for the same controlled benchmark setting as Sec.~\ref{sec:exp}.

\end{document}
\endinput